\documentclass[nolineno]{ieeeaccess}
\usepackage{cite}
\usepackage{amsmath,amssymb,amsthm,mathtools,booktabs,array,graphicx,textcomp}
\usepackage{microtype}
\usepackage{balance}
\usepackage[hidelinks,bookmarks=false]{hyperref}
\hypersetup{pdftitle={The Cost of Missed States: Online Tracking with Current-Observation Resets},pdfauthor={Hiroyasu Usami}}
\newtheorem{theorem}{Theorem}
\newtheorem{proposition}[theorem]{Proposition}
\newtheorem{corollary}[theorem]{Corollary}
\newtheorem{lemma}[theorem]{Lemma}
\theoremstyle{definition}\newtheorem{definition}[theorem]{Definition}
\theoremstyle{remark}
\newcommand{\TV}{\operatorname{TV}}
\newcommand{\osc}{\operatorname{osc}}
\newcommand{\softmax}{\operatorname{softmax}}
\newcommand{\R}{\mathbb R}
\usepackage{etoolbox}
\makeatletter
\patchcmd{\@maketitle}{Digital Object Identifier\space\@doi}{}{}{
  \PackageError{manuscript}{Could not suppress unassigned DOI}{Check official class}}
\patchcmd{\ps@titlepage}{VOLUME\ \thevol, \theyear}{AUTHOR'S MANUSCRIPT}{}{
  \PackageError{manuscript}{Could not set first-page manuscript footer}{Check official class}}
\patchcmd{\ps@titlepage}{VOLUME\ \thevol, \theyear}{AUTHOR'S MANUSCRIPT}{}{
  \PackageError{manuscript}{Could not set verso manuscript footer}{Check official class}}
\patchcmd{\ps@headings}{VOLUME\ \thevol, \theyear}{AUTHOR'S MANUSCRIPT}{}{
  \PackageError{manuscript}{Could not set recto manuscript footer}{Check official class}}
\patchcmd{\ps@headings}{VOLUME\ \thevol, \theyear}{AUTHOR'S MANUSCRIPT}{}{
  \PackageError{manuscript}{Could not set regular verso footer}{Check official class}}
\let\manuscript@titlepagestyle\ps@titlepage
\let\manuscript@headingstyle\ps@headings
\newcommand{\manuscript@oddfoot}[1]{\raisebox{#1}{\hbox to\textwidth{%
  {\footervolfont AUTHOR'S MANUSCRIPT}\hfill{\footerpagefont\thepage}}}}
\newcommand{\manuscript@evenfoot}[1]{\raisebox{#1}{\hbox to\textwidth{%
  {\footerpagefont\thepage}\hfill{\footervolfont AUTHOR'S MANUSCRIPT}}}}
\def\ps@titlepage{\manuscript@titlepagestyle
  \def\@oddfoot{\manuscript@oddfoot{3pt}}\def\@evenfoot{\manuscript@evenfoot{3pt}}}
\def\ps@headings{\manuscript@headingstyle
  \def\@oddfoot{\manuscript@oddfoot{8pt}}\def\@evenfoot{\manuscript@evenfoot{8pt}}}
\expandafter\patchcmd\csname\string\IEEEbiography\endcsname{\vskip \@IEEEBIOskipN plus 1fil minus 0\baselineskip}{\vskip 1.5\baselineskip}{}{
  \PackageError{manuscript}{Could not remove elastic biography gap}{Check official class}}
\makeatother
\AddToHookNext{shipout/foreground}{%
  \begingroup
  \setlength{\unitlength}{1bp}%
  \put(36.17,-189){\makebox(0,0)[lt]{%
    \parbox[t]{\textwidth}{\normalfont\rmfamily\fontsize{8}{9.6}\selectfont
      \parindent=0pt\parskip=0pt\color{black}%
      This work has been submitted to the IEEE for possible publication.\par
      Copyright may be transferred without notice, after which this version may no longer be accessible.}}}%
  \endgroup
}
\begin{document}
\history{}
\doi{}
\title{The Cost of Missed States:\\Online Tracking with Current-Observation Resets}
\author{\uppercase{Hiroyasu Usami}}
\address{Department of Computer Science, College of Engineering, Chubu University, Kasugai, Aichi 487--8501, Japan}
\corresp{Corresponding author: Hiroyasu Usami (e-mail: usami@fsc.chubu.ac.jp).}
\markboth{Usami: The Cost of Missed States}{Usami: The Cost of Missed States}
\begin{abstract}
An online tracker retains one state and may reset only to the current observation, although its complete observed history is available. Offline knows the future and obeys the same reset rule. Under a hard discrepancy radius, for every horizon $T\ge5$ and every feasible randomized causal policy, a fixed path on a known five-point metric requires at least $(T+1)/2$ expected refreshes, against exactly two offline refreshes with a strict radius margin. Missing the useful prefix center forces a refresh at every suffix observation. The path visits four values, with resolution fixed as $T$ grows. Write $J_\lambda$ for cumulative discrepancy plus a charge $\lambda>0$ per refresh, and $V_T$ for the sum of successive observation distances. A candidate prefix adapts Bienkowski and Schmid's randomized lower bound to this objective under current resets, even with a known horizon and $V_T<3\lambda$. This construction needs finer grids; a fixed finite metric with positive minimum separation instead admits a competitive policy for penalized tracking. An anytime policy satisfies $J_\lambda\le\lambda+4\sqrt{\lambda TV_T}$ without knowing the horizon or final variation. Clipping for a known horizon gives $J_\lambda\le\lambda T$ and a factor-three comparison with the minimum of three analytical bounds, without knowing final variation. Current-value diameter and readout sensitivity give same-input output bounds for mixtures, including squared-distance softmax.
\end{abstract}
\begin{keywords}
Competitive analysis, current-observation resets, online algorithms, path variation, retained operators, softmax sensitivity.
\end{keywords}
\titlepgskip=-21pt
\maketitle

\section{Introduction}
An algorithm may remember a past observation without being able to select it as its current state. We study tracking with one retained state and two actions: keep that state or replace it with the current observation. The entire observed history is available. Offline has the same actions and knows the future. Here $T$ is the number of observations and $V_T$ the sum of successive observation distances. We minimize either the refresh count under a hard discrepancy radius or $J_\lambda$, the cumulative state discrepancy plus a charge $\lambda$ per refresh. Current-observation resets can destroy offline competitiveness while still allowing variation-based cost control, which transfers to output error through a suitable sensitivity bound.

Classical tracking models allow representatives that need not coincide with the current observation \cite{bienkowskischmid,chenrussell,yizhang}. Building on Chen and Russell's candidate-selection argument, Theorem~\ref{thm:hardimpossible} gives a separation on a fixed five-point metric. Two possible centers appear in a common prefix. Offline selects the center suited to the eventual suffix while it is available; a policy that misses it must refresh repeatedly. For every $T\ge5$ and every feasible randomized causal policy, a fixed path requires at least $(T+1)/2$ expected refreshes, against exactly two offline refreshes with a strict radius margin. The state space and its resolution stay fixed as $T$ grows. On this family, historical resets give a two-refresh causal policy under the same observations.

For cumulative discrepancy plus a refresh charge $\lambda>0$, Theorem~\ref{thm:softimpossible} adapts Bienkowski and Schmid's nested-interval lower bound. We expose every candidate center in a common prefix, allowing offline to select its eventual center at a legal reset time with uniformly bounded cost. The resulting lower bound applies to randomized causal policies with a known horizon and $V_T<3\lambda$. Its grid becomes finer as the requested ratio increases. On a fixed finite metric with positive minimum separation, refreshing whenever the observation changes is competitive for this penalized objective.

Variation still controls the absolute cost. Without knowing $T$ or the final variation, the policy in Theorem~\ref{thm:anytime} satisfies
\begin{equation}\label{eq:display-01}
 J_\lambda\le\lambda+4\sqrt{\lambda TV_T}
\end{equation}
on every metric-valued path. When $T$ is known, Theorem~\ref{thm:unknownv} bounds $J_\lambda$ by $\lambda T$ and gives a factor-three comparison with the minimum of three analytical cost bounds. The ramp calculation in Theorem~\ref{thm:ramp} and Corollary~\ref{cor:minimax} show that the worst-case order is already unavoidable for offline in the stated regime.

Section~\ref{sec:fidelity} develops output upper bounds for retained operators evaluated at the same current input. For mixtures, current-value diameter and readout sensitivity relate state drift to output discrepancy. We derive sharp local coefficients for squared-distance softmax and give an example in which root-mean-square (RMS) drift over the complete summary tends to zero while the output discrepancy stays fixed.

\section{Model and objectives}\label{sec:setting}
For an integer horizon $T\ge1$, the controller observes a summary $z_t$ in a metric space $(\mathcal Z,d)$ and an input $x_t$ at times $t=0,\ldots,T-1$. Write $z=(z_0,\ldots,z_{T-1})$ for the observed state path. A refresh constructs $A(z_t)$ and replaces the retained state before producing the output, where $\mathcal A$ is the operator space and $A:\mathcal Z\to\mathcal A$ is fixed. The readout $F_t(x,a)$ maps the current input $x$ and installed operator $a$ to an output. If $\tau_t$ is the latest refresh time, the fresh and retained outputs are
\begin{equation}\label{eq:outputs}
\begin{aligned}
 y_t^{\rm fr}&=F_t(x_t,A(z_t)),\\
 y_t^{\rm ca}&=F_t(x_t,A(z_{\tau_t})),\qquad \tau_t\le t.
\end{aligned}
\end{equation}
The initial refresh at $0$ is mandatory and charged. At time $t$, a causal policy bases its decision only on observations through $t$ and its own random choices. Offline knows the complete sequence. Both can keep the active operator or call \emph{refresh}, also called a \emph{reset}, which installs $A(z_t)$ and takes no historical argument.

Historical resets add actions without adding information to a controller that already knows the observation history. On the family in Theorem~\ref{thm:hardimpossible}, they let a causal policy install the center identified by the first suffix observation and use exactly two refreshes.

Constructor parameters, input identities and ordering, and normalization are held fixed or included in $z_t$. Each readout uses the current $x_t$.

The lower bounds permit unrestricted storage of the observed history. The threshold policies below need only the retained summary, the previous summary, accumulated variation, and a constant number of scalar time indices, in addition to the active operator.

\begin{definition}[Variation and tracking functional]
Set
\begin{equation}\label{eq:tracking}
\begin{aligned}
 V_T&=\sum_{t=1}^{T-1}d(z_t,z_{t-1}),\\
 N&=|\{t:\tau_t=t\}|,\\
 J_\lambda&=\sum_{t=0}^{T-1}d(z_t,z_{\tau_t})+\lambda N.
\end{aligned}
\end{equation}
Here $\lambda\ge0$ weights refresh cost relative to metric discrepancy.
\end{definition}
The first term in \eqref{eq:tracking} is cumulative state discrepancy; the second charges for refreshes. We write $N_\pi(z)$ and $J_\lambda(\pi,z)$ for the count and cost of policy $\pi$ on path $z$. On a fixed path, expectations average over the policy's random choices. 

For a hard radius $\eta>0$, feasibility requires $d(z_t,z_{\tau_t})\le\eta$ at every opportunity. The objective is then the refresh count $N$, and $N^*_\eta(z)$ denotes the minimum feasible offline count. The penalized objective $J_\lambda$ instead charges discrepancy. Its minimum over offline schedules with the same reset actions is denoted by $J^*_\lambda(z)$. Unless stated otherwise, there is no maximum-age constraint. Competitive analysis compares a policy's expected cost with a fixed multiple of the offline optimum, possibly plus an additive constant; uniform constants do not depend on the horizon or path. An oblivious path is selected without seeing the policy's realized random choices.

The lower bounds concern tracking in the state metric; Section~\ref{sec:fidelity} gives sufficient conditions for bounds on output discrepancy at the same current input, values, and readout. Extending the certificate to reuse-dependent trajectories would additionally require a bound on error propagation through the underlying dynamics.

\section{Competitive limits}\label{sec:offline}
\subsection{A missed center forces repeated refreshes}
\begin{theorem}[No horizon-independent hard-radius competitive ratio]\label{thm:hardimpossible}
Fix $\eta>0$ and an integer $T\ge5$. Equip the known set
\begin{equation}\label{eq:display-02}
 \mathcal X_\eta=\frac\eta6\{-7,-2,0,2,7\}
\end{equation}
with the usual distance. Let $\pi$ be any randomized causal single-cache policy feasible almost surely on every $\mathcal X_\eta$-valued path at radius $\eta$. The initial refresh is mandatory; subsequent resets use only the current observation, and there is no age constraint. The policy may know $T$, $\eta$, and $\mathcal X_\eta$. There exists a fixed length-$T$ path $z$, visiting at most four values and chosen without observing realized policy coins, such that
\begin{equation}\label{eq:hard-gap}
\begin{gathered}
 N^*_{\eta}(z)=2,\qquad \mathbb E[N_\pi(z)]\ge\frac{T+1}{2},\\
 \frac{\mathbb E[N_\pi(z)]}{N^*_{\eta}(z)}\ge\frac{T+1}{4}.
\end{gathered}
\end{equation}
The offline optimum obeys the same reset rule, and a two-refresh schedule has discrepancy at most $5\eta/6$ throughout. Thus no multiplicative competitive bound independent of $T$ exists, even on this fixed five-point metric against an oblivious path.
\end{theorem}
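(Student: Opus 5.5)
The plan is to build two paths that share a prefix exposing both candidate centers, and to let the oblivious adversary choose the suffix using only the policy's distribution at the end of the prefix. Work in units of $\eta/6$, so the radius is $6$ units and the points are $-7,-2,0,2,7$. Take the prefix $z_0=0$, $z_1=-2$, $z_2=2$. Take the two suffixes for $t=3,\ldots,T-1$ to alternate
\[
 S_A:\ 7,0,7,0,\ldots,\qquad S_B:\ -7,0,-7,0,\ldots
\]
Each completed path visits exactly four values. Since $T\ge5$, each suffix has length $T-3\ge2$.

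\emph{Offline count.} For path $A$, refresh at $0$ (center $0$, covering $-2$ and $2$) and again at $2$ (center $2$). Center $2$ is within $5$ units of both $7$ and $0$. For path $B$, refresh at $0$ and again at $1$ (center $-2$). Center $-2$ is within $4$ units of $2$ and within $5$ units of $-7$ and $0$. Every discrepancy is at most $5\eta/6$. One refresh is infeasible, because the only admissible state is then $z_0=0$, which lies $7>6$ units from $\pm7$. Hence $N^*_\eta=2$ on both paths.

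\emph{Forced refreshes after a missed center.} After the prefix, the retained state is one of $0,-2,2$. On suffix $A$, any state other than $2$ is at distance $7$ or $9$ from the first observation $7$, so a refresh is forced. After that, the state always equals the latest observation. Consecutive suffix values $7$ and $0$ are $7$ units apart, so every later suffix observation also forces a refresh. A missed center on suffix $A$ therefore costs $T-3$ suffix refreshes, and the symmetric argument applies to suffix $B$.

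\emph{Averaging over the two suffixes.} Let $a$ and $b$ be the probabilities that the state after time $2$ equals $2$ or $-2$, respectively. These probabilities depend only on the common prefix, because the policy is causal. Reaching state $2$ requires a refresh at time $2$, and reaching state $-2$ requires one at time $1$. Counting the mandatory initial refresh, the optional prefix refresh, and the forced suffix refreshes gives
\[
 \mathbb E_A[N]\ge1+a+b+(1-a)(T-3),\qquad \mathbb E_B[N]\ge1+a+b+(1-b)(T-3).
\]
The larger of the two is at least their average. With $s=a+b\in[0,1]$, this average equals $(T-2)+s\bigl(1-(T-3)/2\bigr)$. For $T\ge5$ the coefficient of $s$ is nonpositive, so the minimum is attained at $s=1$, where the bound equals $(T+1)/2$. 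The adversary fixes in advance whichever suffix attains this maximum; this choice uses only the distribution of the policy, not its realized coins, so the path is oblivious. The ratio bound then follows by dividing by $N^*_\eta=2$.

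\emph{Main obstacle.} The delicate step is the bookkeeping that yields exactly $(T+1)/2$ rather than $(T-1)/2$. It requires both of the following: charging the prefix refresh needed to hold the correct center, and averaging over the two suffixes instead of using the cruder bound "probability at least $1/2$ of being wrong." I also need to check that no intermediate state can escape the forced refreshes. This holds because every admissible state is either a prefix value or the most recent observation, and no such value is within radius of the next suffix observation.
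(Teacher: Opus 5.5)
Your proposal is correct and follows the paper's proof essentially step for step. It uses the same prefix $0,-\eta/3,\eta/3$, the same suffix alternating between $\pm7\eta/6$ and $0$, the same prefix count of $1+\mathbf{1}[P\ne0]$, and the same two-refresh offline schedule at radius $5\eta/6$. The only difference is cosmetic: the paper picks the less likely center $c$ with $\Pr(P=c)\le(1-p_0)/2$, while you bound the larger branch expectation by the average of the two, which gives the same bound $(T+1)/2$.
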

\begin{proof}
Give the common prefix $z_0=0,z_1=a=-\eta/3,z_2=b=\eta/3$. At its end the retained point $P$ belongs to $\{0,a,b\}$. Put $p_0=\Pr(P=0)$ and choose $c\in\{a,b\}$ so that $q=\Pr(P=c)\le(1-p_0)/2$. Append $L=T-3$ points alternating between $y=7c/2$ and $0$, starting with $y$. The full path depends on the policy's law but not its realized coins.

On $P\ne c$, the first suffix point is at distance at least $7\eta/6>\eta$ from $P$, forcing a refresh. Every later suffix step has length $7\eta/6$, so all $L$ suffix observations force refreshes on this event. The prefix uses at least one refresh when $P=0$ and at least two when $P\ne0$. Since $L\ge2$,
\begin{equation}\label{eq:display-03}
\begin{aligned}
 \mathbb E[N_\pi]
 &\ge2-p_0+L(1-q)\\
 &\ge2+L/2+p_0(L/2-1)\\
 &\ge2+L/2=(T+1)/2.
\end{aligned}
\end{equation}
No independence between the prefix cost and $P$ is used.

Offline refreshes at $0$ and at the prefix occurrence of $c$. Before that occurrence its discrepancy is at most $\eta/3$; the remaining prefix discrepancy is at most $2\eta/3$. The suffix discrepancies are $5\eta/6$ at $y$ and $\eta/3$ at $0$. This schedule therefore uses two refreshes at radius $5\eta/6$. Retaining only $z_0=0$ fails at the first suffix point because $|y|=7\eta/6>\eta$, proving $N^*_\eta=2$.
\end{proof}
Each realized path visits four values: in units of $\eta/6$, either $\{-7,-2,0,2\}$ or $\{-2,0,2,7\}$. Its diameter is $3\eta/2$, and its minimum separation is $\eta/3$. The growing count gap requires neither a larger ambient set nor finer values as $T$ grows; the offline radius retains margin $\eta/6$. Figure~\ref{fig:missed} shows the mechanism.

\paragraph{Historical resets on the constructed family}
Suppose a refresh may install any previously observed state. One deterministic causal policy serves both branches of Theorem~\ref{thm:hardimpossible} with exactly two refreshes: retain $0$ through the prefix, then use the first suffix observation $y$ to identify and install the remembered $c=2y/7$ before producing the output. The prefix discrepancy is at most $\eta/3$; thereafter it is $5\eta/6$ at $y$ and $\eta/3$ at $0$. One refresh fails since $|y|=7\eta/6>\eta$. Historical resets therefore remove the growing count gap on this family.

\begin{figure*}[t]
\centering
\includegraphics[width=\textwidth]{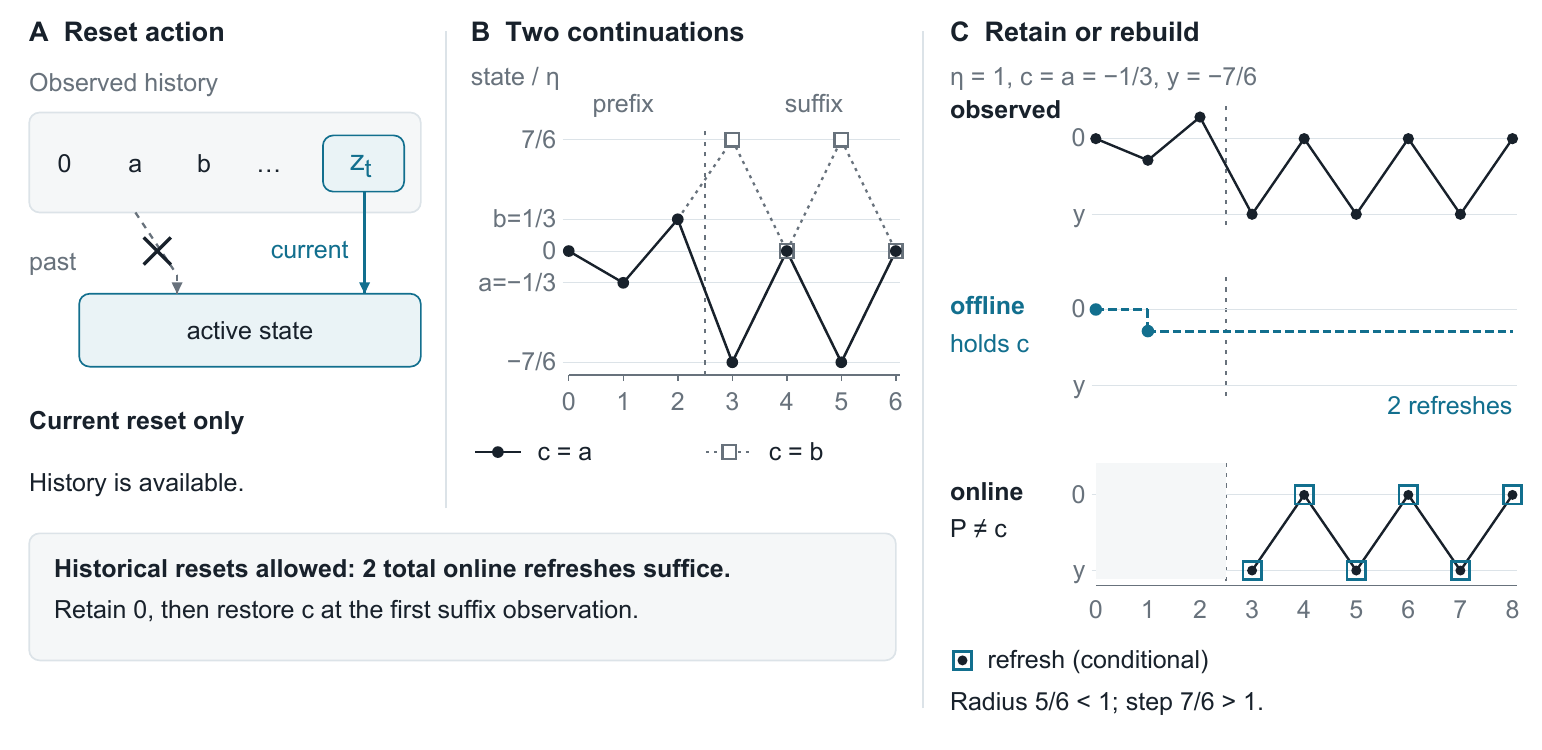}
\caption{The reset constraint in Theorem~\ref{thm:hardimpossible}. (A) History is available; a refresh selects the current observation. (B) A common prefix leads to two suffixes. The branch is chosen from the policy's retained-state distribution, independently of realized policy coins. (C) For $\eta=1$ and $c=a=-1/3$, offline holds the center within radius $5/6$. The online prefix is omitted. Conditional on $P\ne c$, every suffix observation forces a refresh; consecutive endpoints are $7/6>1$ apart. The expected count includes prefix refreshes.}
\label{fig:missed}
\end{figure*}

\subsection{Penalized tracking}
\begin{theorem}[Penalized noncompetitiveness under current resets]\label{thm:softimpossible}
For every $\lambda>0$ and $C\ge1$, there is a finite horizon $T_C$ such that every randomized causal policy, even when given $T_C$ and $\lambda$, has a fixed real-valued path of length $T_C$ with
\begin{equation}\label{eq:penalized-gap}
\begin{gathered}
 z_t\in[0,\lambda],\qquad V_{T_C}<3\lambda,\\
 \mathbb E[J_\lambda(\pi,z)]>C J^*_\lambda(z).
\end{gathered}
\end{equation}
Both policy and offline optimum retain one state, refresh initially, and may subsequently reset only to the current observation. There is no age constraint. The path may depend on the policy's law but is chosen without observing its realized coins. No finite competitive constant, even with an additive allowance uniform in the horizon and path, exists in this class.
\end{theorem}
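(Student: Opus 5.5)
The plan is to adapt the Bienkowski--Schmid nested-interval adversary, using the same ``common prefix exposes every candidate'' device as Theorem~\ref{thm:hardimpossible}. The reset rule forces one change: offline can install a center only while that center is being observed. So the path starts with a prefix that visits, in monotone order, a finite grid $G_k\subset[0,\lambda]$ of candidate centers. This sweep costs variation at most $\lambda$.

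After the prefix, the adversary runs $k$ phases. At the start of phase $j$ there is a current interval $I_{j-1}$; set $I_0=[0,\lambda]$, of width $\lambda$.
\begin{itemize}
\item Split $I_{j-1}$ into two halves.
\item From the policy's law, compute the probability that the retained state at the end of the previous phase lies in each half (or sufficiently close to it). Choose the half $I_j$ with the \emph{smaller} probability. This is the same averaging trick as choosing $c$ with $q\le(1-p_0)/2$ in Theorem~\ref{thm:hardimpossible}.
\item During phase $j$, the observations sit at points of $I_j$ that are not grid centers. A policy holding a state outside $I_j$ then faces a choice. It can pay discrepancy about $\lambda2^{-j}$ per step for $m_j$ steps, with $m_j\approx2^{j}$ so this totals $\Theta(\lambda)$. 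Or it can reset to a current observation at cost $\lambda$. That reset lands on a non-center point, which the next phase can again exploit.
\end{itemize}
With conditional probability at least $1/2$ per phase, the online policy pays $\Omega(\lambda)$. Linearity of expectation gives $\mathbb E[J_\lambda]\ge c_0k\lambda$; no independence between phases is needed. The phase variations are at most $\lambda2^{-j}$ times $O(1)$ per excursion, so with care $V$ stays below $3\lambda$ in total. The horizon is $T_C=|G_k|+\sum_j m_j$, fixed once $k$ is fixed.

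For offline, the nested intervals shrink to a point $c^\ast$, which we arrange to lie in $G_k$. Offline refreshes at $0$ and at the prefix occurrence of $c^\ast$, paying $2\lambda$. During phase $j$ it pays at most $\lambda2^{-j}\cdot m_j$ in discrepancy. The step I expect to be the \emph{main obstacle} is keeping this total bounded by $O(\lambda)$ uniformly in $k$, together with the prefix discrepancy $\sum_{t<|G_k|}|z_t-z_{\tau_t}|$. A naive sweep of $|G_k|=2^k$ points costs $\Theta(2^k\lambda)$, which would swamp the $k\lambda$ online gain.

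My first remedy is to make the prefix sweep the grid \emph{coarse to fine}, in geometrically shrinking excursions around $0$. The prefix discrepancy against the held point $0$ then becomes a geometric series. Equivalently, I would allow offline a constant number of extra refreshes during the prefix. If that fails, I would rebalance the scales: phase $j$ uses discrepancy $\delta_j$ and length $m_j$ with $\delta_jm_j=\lambda$ for online, while offline's within-phase cost is made $\delta_j'm_j$ with $\delta_j'\ll\delta_j$. This is done by placing the phase observations symmetrically at distance $\delta_j'$ from $c^\ast$ but at distance at least $\delta_j$ from every point a causal policy could plausibly hold. The grid then becomes finer as $C$ grows, matching the statement's remark. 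Once the total offline cost is $O(\lambda)$, choose $k$ with $c_0k>C\cdot O(1)$ to get \eqref{eq:penalized-gap}.

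The final sentence of the theorem, excluding an additive allowance, follows by scaling. For a fixed target constant $\beta$, concatenate many independent copies of the construction, or rescale $\lambda$ together with the path. The ratio is preserved while $J^*_\lambda$ grows without bound, so any uniform additive allowance becomes negligible.
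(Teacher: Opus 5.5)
Your outline does not yet prove the theorem. The step you flag as the main obstacle, bounding offline's cost by $O(\lambda)$, is never resolved, and in your halving design it cannot be.

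\emph{Prefix.} $2^k$ candidates spread across $[0,\lambda]$ lie at total distance $\Omega(2^k\lambda)$ from any bounded number of held points, whatever order they are swept in. So neither a coarse-to-fine sweep nor a few extra offline refreshes helps.

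\emph{Phases.} A binary split makes $I_j$ half of its parent, so offline's distance to the phase-$j$ observations is of the same order as online's. Holding $c^\ast$ can then cost order $\lambda$ per phase, the same order as the online gain, and the certified ratio stays bounded.

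Your second remedy posits $\delta_j'\ll\delta_j$ but gives no mechanism compatible with adaptive halving. The phase-$j$ observations must be fixed before $I_{j+1},\dots,I_k$ are chosen, because those choices depend on the policy's reaction to phase $j$. They can therefore be placed only relative to $I_j$, not within $\delta_j'$ of the still-undetermined $c^\ast$.

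The paper's proof supplies two missing ingredients. First, it shrinks the absolute scale. With $M$ candidates at spacing $\delta=\lambda/M^2$, the monotone prefix costs offline at most $\delta M(M-1)/2<\lambda/2$. Phase lengths $L_i=M^2/m_i$ keep online's distortion at distance $m_i/4$ grid units equal to $\lambda/4$. Second, it uses $b$ children per level, with $b$ equal to the number of phases, and observes the left endpoint of a uniformly chosen child. Offline, holding $x\in R_i$, is within $|R_i|-1<m_i/b$ units and pays under $\lambda/b$ per phase. Online's retained point is at distance at least $m_i/4$ from the random endpoint with probability at least $1/4$, so it pays at least $\lambda/16$ per phase.

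Alternatively, your second remedy does work once you drop the halving entirely:
\begin{itemize}
\item Take the prefix $(0,\lambda/2)$. Let $P$ be the retained state at its end, and fix $c^\ast\in\{0,\lambda/2\}$ once with $\Pr(P=c^\ast)\le1/2$.
\item Let phase $i$ repeat $c^\ast+\delta_i'$, with $\delta_i'=(\lambda/2)(k+1)^{-i}$, for $m_i=\lceil\lambda/(\delta_{i-1}'-\delta_i')\rceil$ steps.
\item On $\{P\ne c^\ast\}$, every state online can hold is at distance at least $\delta_{i-1}'-\delta_i'$ from the phase-$i$ value, so online pays at least $\lambda$ per phase.
\item Holding $c^\ast$ costs at most $\lambda/k+\delta_i'$ per phase, and $V<3\lambda/2$.
\end{itemize}

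For the additive allowance, rescaling $\lambda$ is unavailable because $\lambda$ is given, and concatenation breaks $V<3\lambda$. Taking more phases suffices, since offline's cost stays $O(\lambda)$ while online's grows linearly.
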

We prepend the candidate centers to the nested-interval construction of Bienkowski and Schmid \cite{bienkowskischmid}, allowing offline to install its eventual center when observed. In the proof, $M$ is the number of candidate centers, $\delta$ their spacing, and $b$ both the number of phases and the branching factor (Figure~\ref{fig:stages}). Each phase costs online at least $\lambda/16$ in expectation, while offline costs less than $7\lambda/2$. Thus $b>56C$, with $56=16(7/2)$, gives the requested ratio. The choice $M=b^b$ permits $b$ equal splits down to one candidate; $\delta=\lambda/M^2$ keeps the prefix distortion below $\lambda/2$.

\begin{proof}
Fix $\lambda>0$ and $C\ge1$, and set
\begin{equation}\label{eq:display-04}
 b=\lceil56C\rceil+1,\qquad M=b^b,\qquad \delta=\frac{\lambda}{M^2}.
\end{equation}
The common prefix is $\delta(0,1,\ldots,M-1)$. Starting with $R_0=\{0,\ldots,M-1\}$, at level $i=1,\ldots,b$ divide $R_{i-1}$ into $b$ equal contiguous children and choose $R_i$ uniformly, independently of earlier choices conditional on its parent. Put $r_i=\min R_i$. Phase $i$ repeats $\delta r_i$ a fixed number $L_i$ of times. The horizon is
\begin{equation}\label{eq:display-05}
 T_C=M+\sum_{i=1}^b M b^{i-1}
     =M+\frac{M^2-M}{b-1},
\end{equation}
where
\begin{equation}\label{eq:display-06}
 m_i=|R_{i-1}|=\frac{M}{b^{i-1}},\qquad
 L_i=\frac{M^2}{m_i}=M b^{i-1}.
\end{equation}
All lengths are integers and independent of the policy and child choices. Sample the complete nested sequence before execution. This defines an oblivious finite input distribution. All auxiliary observed inputs are identical across its branches. The policy may know this distribution and the horizon, but not the unrevealed choices.

\smallskip\noindent\emph{A legal low-cost comparator.}
Let $x$ be the sole element of $R_b$. Offline refreshes initially and at the prefix occurrence of $\delta x$, omitting the second refresh if $x=0$, and then holds $\delta x$. Its unscaled prefix distortion is
$x(x-1)/2+(M-1-x)(M-x)/2\le M(M-1)/2$.
Because $x\in R_i$, each phase has discrepancy at most $\delta(|R_i|-1)$. Thus
\begin{equation}\label{eq:display-07}
\begin{aligned}
 J^*_\lambda(z)
 &\le2\lambda+\frac{\delta M(M-1)}2
       +\sum_{i=1}^b\delta L_i(|R_i|-1)\\
 &<2\lambda+\frac\lambda2+\lambda
   =\frac{7\lambda}{2}.
\end{aligned}
\end{equation}
Every reset occurs when its state is observed.

\smallskip\noindent\emph{Every phase is costly online.}
Condition on the observations and policy randomness before phase $i$, and write $p$ for the unscaled retained state. The $b$ possible left endpoints are spaced $m_i/b$ apart. Since $b\ge57$, at most $\lfloor b/2\rfloor+1$ endpoints lie in an interval of length $m_i/2$, leaving at least a quarter with $|p-r_i|\ge m_i/4$. If the policy refreshes during this phase it pays at least $\lambda$; otherwise it incurs distortion $\delta L_i|p-r_i|$. Consequently, with $J_i$ denoting phase cost,
\begin{equation}\label{eq:display-08}
\begin{gathered}
 J_i\ge\min\{\lambda,\delta L_i|p-r_i|\},\\
 \Pr\{|p-r_i|\ge m_i/4\mid\text{past}\}\ge\tfrac14,\\
 \mathbb E[J_i\mid\text{past}]\ge\lambda/16.
\end{gathered}
\end{equation}
The factor $1/16$ is the product of the probability $1/4$ and the cost fraction $1/4$, using $\delta L_i m_i=\lambda$. This bound allows the current observation to be seen before deciding and holds for every subsequent random choice of the policy. Summing over phases gives
\begin{equation}\label{eq:display-09}
 \mathbb E_{z,\pi}[J_\lambda(\pi,z)]
 \ge\frac{b\lambda}{16}>\frac{7C\lambda}{2}.
\end{equation}
The distribution has finite support and is independent of realized policy coins. Hence some fixed path has policy-expected cost at least $b\lambda/16$. Together with \eqref{eq:display-07}, this proves \eqref{eq:penalized-gap}.

All states lie in $[0,\delta(M-1)]\subset[0,\lambda]$. The suffix endpoints are nondecreasing. The prefix, its drop to $r_1$, and the remaining suffix each contribute at most $\delta(M-1)$, so every path satisfies
\begin{equation}\label{eq:display-10}
 V_{T_C}\le3\delta(M-1)<\frac{3\lambda}{M}<3\lambda.
\end{equation}
For any additive allowance $A\ge0$, use $b=\lceil56C+16A/\lambda\rceil+1$ in the same construction. A fixed path then satisfies $\mathbb E[J_\lambda(\pi,z)]>C J^*_\lambda(z)+A$; the horizon depends only on $C$ and $A/\lambda$.
\end{proof}
The grid grows with $C$, with spacing
\begin{equation}\label{eq:display-11}
\begin{gathered}
 \delta_{\min}=\frac{\lambda}{M^2},\qquad
 \log_2\frac{\lambda}{\delta_{\min}}=2b\log_2 b\\
 =\Theta(C\log C)\qquad(C\to\infty).
\end{gathered}
\end{equation}
This construction uses a growing grid and increasing precision, whereas Theorem~\ref{thm:hardimpossible} keeps both fixed. In grid units the refresh charge is $C_{\rm grid}=\lambda/\delta_{\min}=M^2=b^{2b}$. The ratio lower bound $b/56$ therefore has order $\log C_{\rm grid}/\log\log C_{\rm grid}$, matching the order of the Bienkowski--Schmid construction \cite{bienkowskischmid}. The prefix preserves that lower bound while making offline's selected center a legal current-observation reset. The stated parameters give $M=57^{57}\approx1.22\times10^{100}$ and $T_C\approx2.64\times10^{198}$ already at $C=1$.

\begin{figure*}[t]
\centering
\includegraphics[width=\textwidth]{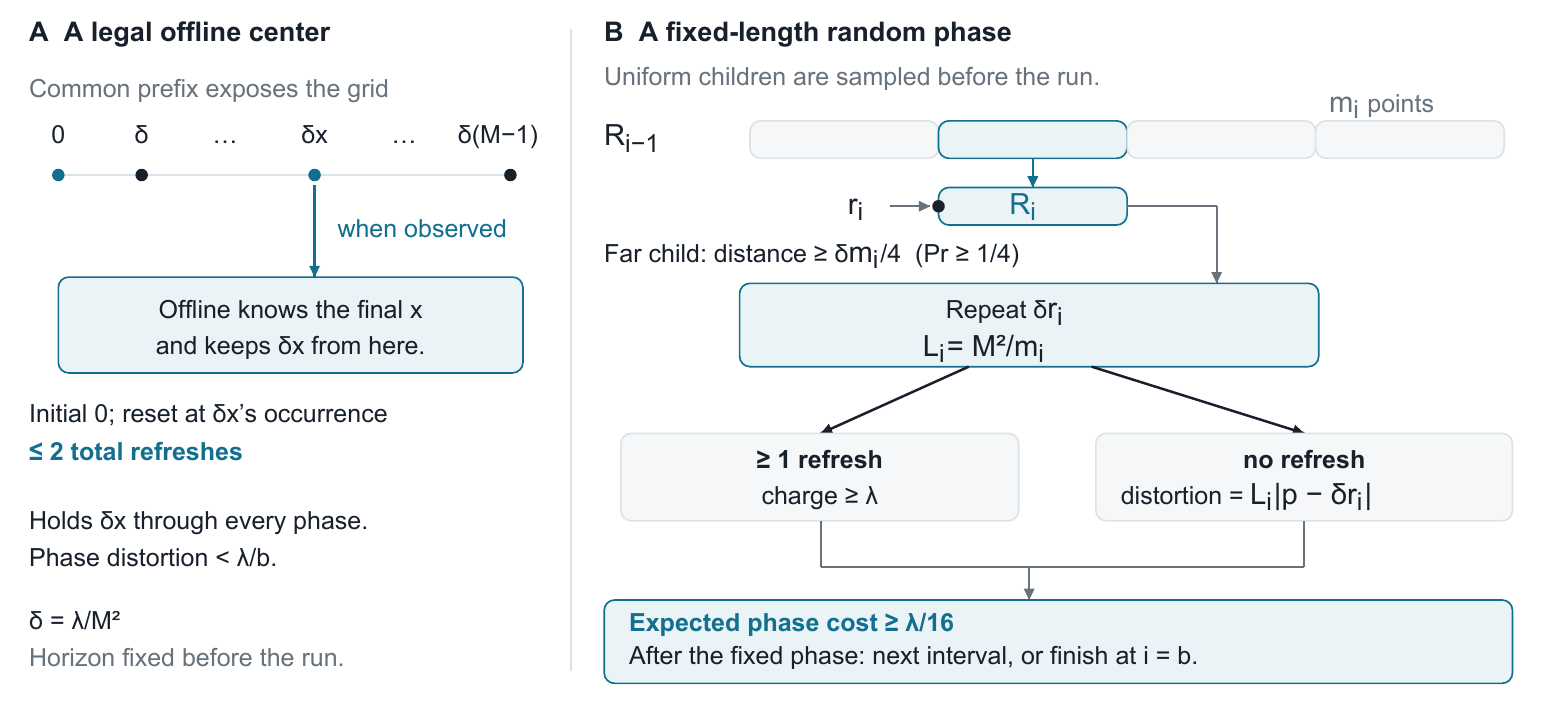}
\caption{A candidate prefix for the nested-interval construction in Theorem~\ref{thm:softimpossible}. The prefix exposes every candidate, so offline can install the final center when observed. Each fixed-length phase repeats the left endpoint of a uniformly chosen smaller child interval. Online incurs a refresh charge or, without a refresh, the full phase distortion. Its conditional expected phase cost is at least $\lambda/16$. Offline holds the final center at total cost below $7\lambda/2$, including prefix and initialization. Child choices, phase lengths, and horizon are fixed before execution.}
\label{fig:stages}
\end{figure*}

\paragraph{Fixed resolution}
On a fixed finite metric with minimum nonzero separation $\delta>0$, refreshing initially and whenever the observation changes gives a $\max\{1,2\lambda/\delta\}$-competitive policy for $J_\lambda$. To see this, let $Q$ count observation changes and let an offline schedule have discrepancy $D$, refresh count $N$, and post-decision errors $e_t$. At a change with no offline refresh, the retained state is unchanged, so $\delta\le e_{t-1}+e_t$. Each error occurs at most twice in this sum, giving $Q\le N-1+2D/\delta$. The policy has zero discrepancy, hence its cost is at most $\lambda N+(2\lambda/\delta)D$, which is bounded by $\max\{1,2\lambda/\delta\}(D+\lambda N)$. Fixed resolution and fixed charge therefore admit a bound uniform in the horizon. These bounds leave the best dependence on resolution undetermined.

\subsection{An exact offline benchmark}
Every single-cache schedule partitions the fully observed path into segments, giving an exact offline benchmark for the examples in Section~\ref{sec:tracking}. Create vertices $0,\ldots,T$. The edge $s\to u$, $s<u$, means refreshing at $s$ and holding through $u-1$, with segment cost $c(s,u)$. The minimum cost through $u-1$ obeys
\begin{equation}\label{eq:offline-oracle}
\begin{gathered}
 c(s,u)=\lambda+\sum_{t=s}^{u-1}d(z_t,z_s),\\
 D_0=0,\qquad D_u=\min_{0\le s<u}\{D_s+c(s,u)\}.
\end{gathered}
\end{equation}
Thus $D_T=J^*_\lambda(z)$ is an exact offline oracle for $J_\lambda$. Updating segment costs incrementally uses $O(T^2)$ distance evaluations and arithmetic operations, with $O(T)$ auxiliary storage beyond the observed path. The time bound treats each distance evaluation as one unit; high-dimensional distance computation and exact-arithmetic bit costs are separate. An integer maximum segment length $K\ge1$ restricts edges to $u-s\le K$.

For a hard output tolerance $\varepsilon$, suppose $B_{s,t}$ is a valid current-input output certificate for retaining state $s$ at $t$. Keep only edges satisfying $\max_{s\le t<u}B_{s,t}\le\varepsilon$, and assign edge cost one to minimize refresh count within this certified class. If the certificates are conservative, this class can be smaller than the set of schedules satisfying the actual output tolerance. In the simple metric-radius version, $B_{s,t}=d(z_t,z_s)$ and the tolerance is $\eta$. Time-dependent reconstruction and controller costs can also be accumulated on edges when their segment dependence is declared. The oracle uses the full path and provides an instance-specific benchmark.

\section{Tracking under bounded variation}\label{sec:tracking}
Theorem~\ref{thm:softimpossible} rules out a constant ratio to offline, while the following variation bounds control absolute cost through $T$ and $V_T$. Table~\ref{tab:scope} summarizes their scope; Figure~\ref{fig:analytic} compares update decisions on two explicit paths.

\begin{table*}[t]
\caption{Main results. Counts include the initial refresh; no age constraint is imposed.}
\label{tab:scope}
\centering\footnotesize
\begin{tabular}{>{\raggedright\arraybackslash}p{0.13\textwidth}>{\raggedright\arraybackslash}p{0.20\textwidth}>{\raggedright\arraybackslash}p{0.22\textwidth}>{\raggedright\arraybackslash}p{0.35\textwidth}}
\toprule
Result & Policy class & State space or precision & Guarantee \\
\midrule
Theorem~\ref{thm:hardimpossible} & Randomized feasible causal & Fixed known five-point metric & For each policy and $T\ge5$, some path has $\mathbb E[N]\ge(T+1)/2$ and $N^*_{\eta}=2$ \\[3pt]
Theorem~\ref{thm:softimpossible} & Randomized causal & Growing candidate grid with finer resolution & Penalized objective: no uniform competitive constant, with a known horizon and $V_T<3\lambda$ \\[3pt]
Theorem~\ref{thm:anytime} & Deterministic causal & Any metric & Anytime cost $J_\lambda\le\lambda+4\sqrt{\lambda TV_T}$; horizon and final variation unknown \\[3pt]
Theorem~\ref{thm:unknownv} & Deterministic causal, known horizon & Any metric & Unknown final variation: $J_\lambda\le\lambda T$ and factor three against the minimum of the analytical bounds in \eqref{eq:clipped-cost} \\[3pt]
Corollary~\ref{cor:minimax} & Randomized causal policies allowed & Real paths with variation budget $V>0$ & Minimax order $\Theta(\sqrt{\lambda TV})$ when $v\le\lambda\le vT^2$, where $v=V/(T-1)$ and $T\ge2$ \\
\bottomrule
\end{tabular}
\end{table*}

The threshold policy refreshes at $0$ and whenever $d(z_t,z_{\tau_{t-1}})>\eta$, before producing the output. Optionally it also refreshes when $t-\tau_{t-1}\ge K$, for an integer $K\ge1$. Age is measured in the opportunity index $t$.

\begin{theorem}[Variation accounting]\label{thm:variation}
For $\eta>0$, the threshold-and-age policy satisfies
\begin{equation}\label{eq:variation-accounting}
\begin{gathered}
 d(z_t,z_{\tau_t})\le\eta,\\
 N\le\min\left\{T,\ 1+\frac{V_T}{\eta}+\left\lfloor\frac{T-1}{K}\right\rfloor\right\}.
\end{gathered}
\end{equation}
Without the age rule, the final term is absent.
\end{theorem}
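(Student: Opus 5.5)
The discrepancy bound comes straight from the rule. At time $t$ the policy compares $z_t$ with the previously retained point $z_{\tau_{t-1}}$. If $d(z_t,z_{\tau_{t-1}})\le\eta$ and no age refresh fires, then $\tau_t=\tau_{t-1}$ and the discrepancy is at most $\eta$. Otherwise it refreshes, so $\tau_t=t$ and the discrepancy is zero. Time $0$ is a mandatory refresh, which starts the induction. The bound $N\le T$ is trivial, since there is at most one refresh per opportunity.

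For the count bound I would split the refreshes at times $t\ge1$ into two classes. A \emph{threshold refresh} is one with $d(z_t,z_{\tau_{t-1}})>\eta$. An \emph{age refresh} is any other refresh, which must then have $t-\tau_{t-1}\ge K$.

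\textbf{Threshold refreshes.} Let $s=\tau_{t-1}$ be the previous refresh time. The triangle inequality gives
\[
\eta<d(z_t,z_s)\le\sum_{u=s+1}^{t}d(z_u,z_{u-1}).
\]
This is the variation accumulated over the half-open window $(s,t]$. Consecutive refresh times partition $\{1,\ldots,T-1\}$ into disjoint windows of this form. Summing over threshold refreshes therefore gives $\eta\cdot\#\{\text{threshold}\}<V_T$, and in any case $\#\{\text{threshold}\}\le V_T/\eta$.

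\textbf{Age refreshes.} Each age refresh occurs at least $K$ steps after the preceding refresh of either kind. So if age refreshes occur at times $t_1<\dots<t_m$, the gaps between them are at least $K$, and the first satisfies $t_1\ge K$. This gives $t_m\ge mK$, and since $t_m\le T-1$ we get $m\le\lfloor (T-1)/K\rfloor$. The point to check is that a threshold refresh in between only resets the clock, so consecutive age refreshes stay at least $K$ apart.

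Adding the initial refresh yields $N\le 1+V_T/\eta+\lfloor (T-1)/K\rfloor$. Without the age rule the last term disappears. The only delicate step is the window-disjointness bookkeeping in the threshold count, which is routine once refresh times are used as window boundaries.
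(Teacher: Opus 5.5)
Your proof is correct and follows the paper's route. You classify each post-initial refresh as a threshold (drift) or age event, and bound drift events by $V_T/\eta$ using the triangle inequality over disjoint inter-refresh windows; your spacing argument $t_m\ge mK$ for age events is a reformulation of the paper's observation that the disjoint age intervals each have length at least $K$ and total at most $T-1$.
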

\begin{proof}
At a refresh the discrepancy is zero; otherwise it is at most $\eta$. Classify each later refresh as a drift event if the strict drift test holds, and otherwise as an age event. Between the preceding refresh and each drift event, path variation exceeds $\eta$ by the triangle inequality. Those intervals are disjoint, so the number of drift events is at most $V_T/\eta$. Each age interval has at least $K$ planned steps. Their disjoint lengths total at most $T-1$, giving the floor term. Add the initial refresh and the trivial $T$ cap.
\end{proof}
This variation argument is closely related to send-on-delta reporting \cite{miskowicz}. With the age cap in force, if $d(z_t,z_s)\le\nu(t-s)$ for some rate $\nu\ge0$ and all $0\le s\le t<T$, the post-decision discrepancy is at most $\min\{\eta,\nu(K-1)\}$. Section~\ref{sec:fidelity} converts this bound to output error through readout sensitivity.

\begin{theorem}[Tracking distortion with a refresh charge]\label{thm:upper}
Without an age constraint,
\begin{equation}\label{eq:fixed-bound}
J_\lambda\le T\eta+\lambda(1+V_T/\eta).
\end{equation}
For known $V_T>0$ and $\lambda>0$, choosing $\eta=\sqrt{\lambda V_T/T}$ gives
\begin{equation}\label{eq:display-12}
 J_\lambda\le\lambda+2\sqrt{\lambda T V_T}.
\end{equation}
Using a known budget $\overline V\ge V_T$ gives the same bound with $\overline V$. For $V_T=0$, the initial-only policy has $J_\lambda=\lambda$. For $\lambda=0$, refreshing always gives $J_0=0$.
\end{theorem}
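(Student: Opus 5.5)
The plan is to apply Theorem~\ref{thm:variation} to the threshold policy without the age rule and split $J_\lambda$ into its discrepancy and refresh parts. Every opportunity satisfies $d(z_t,z_{\tau_t})\le\eta$, so the cumulative discrepancy over the $T$ opportunities is at most $T\eta$. Theorem~\ref{thm:variation} with the floor term absent gives $N\le1+V_T/\eta$. Adding $\lambda N$ yields \eqref{eq:fixed-bound} at once. The proof of that count bound is the only place where metric structure enters: the disjoint inter-refresh intervals each carry variation exceeding $\eta$, by the triangle inequality.

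Next comes the optimization. For $V_T>0$ and $\lambda>0$ the right side of \eqref{eq:fixed-bound} is $\lambda+T\eta+\lambda V_T/\eta$. It is minimized by balancing the two $\eta$-dependent terms at $\eta=\sqrt{\lambda V_T/T}$, which is positive and therefore admissible. Then $T\eta=\sqrt{\lambda TV_T}$ and $\lambda V_T/\eta=\sqrt{\lambda TV_T}$, giving \eqref{eq:display-12}.

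For a known budget $\overline V\ge V_T$, choose $\eta=\sqrt{\lambda\overline V/T}$ instead. The count bound $1+V_T/\eta\le1+\overline V/\eta$ is monotone in the variation, so the same calculation gives $\lambda+2\sqrt{\lambda T\overline V}$. We need $\overline V>0$ here; the case $\overline V=0$ forces $V_T=0$ and is handled next.

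The degenerate cases are checked directly. If $V_T=0$, all observations coincide with $z_0$, so the initial-only policy has zero discrepancy and exactly one refresh, giving $J_\lambda=\lambda$. If $\lambda=0$, refreshing at every opportunity gives zero discrepancy and zero charge. No step is a genuine obstacle. The only point needing care is that the optimized $\eta$ requires knowledge of $V_T$ (or $\overline V$) and $T$ in advance, which is exactly what the hypothesis supplies; I would note this in the proof, since it motivates the anytime variant that follows.
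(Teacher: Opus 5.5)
Your proposal is correct and follows the paper's own proof: distortion at most $T\eta$, refresh charge from Theorem~\ref{thm:variation} without the age term, evaluation at the balancing threshold $\eta=\sqrt{\lambda V_T/T}$, and the budget and degenerate cases checked directly from the definitions. Your remark that the budget version needs $\overline V>0$, with $\overline V=0$ forcing $V_T=0$ and reducing to the initial-only case, is a small precision that the paper leaves implicit.
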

\begin{proof}
The distortion is at most $T\eta$, and Theorem~\ref{thm:variation} bounds the charge. Minimize the two terms $T\eta+\lambda V_T/\eta$. The edge cases follow directly from the definitions.
\end{proof}
Refreshing at every opportunity costs $\lambda T$; retaining only the initial state costs at most $\lambda+(T-1)V_T$. The optimized fixed threshold requires the final variation or a known upper bound; the anytime rule below uses cumulative observed variation. An age constraint adds at most $\lambda\lfloor(T-1)/K\rfloor$ to the threshold upper bound and can rule out the initial-only schedule.

\subsection{An anytime policy}
In the adaptive rules, $S_t$ denotes variation observed through time $t$, with $S_0=0$.

\begin{theorem}[Anytime variation adaptation]\label{thm:anytime}
Let $\lambda>0$ be known. Refresh at $0$, and for $t\ge1$ define
\begin{equation}\label{eq:anytime-rule}
\begin{gathered}
 S_t=\sum_{j=1}^t d(z_j,z_{j-1}),\\
 \eta_t=\sqrt{\lambda S_t/t},\qquad t\ge1.
\end{gathered}
\end{equation}
After observing $z_t$, refresh exactly when $d(z_t,z_{\tau_{t-1}})>\eta_t$. With no age constraint, this causal policy, without knowing the horizon or final variation, satisfies at every finite horizon $T\ge1$
\begin{equation}\label{eq:anytime-cost}
J_\lambda\le\lambda+4\sqrt{\lambda T V_T}.
\end{equation}
\end{theorem}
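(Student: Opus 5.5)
The plan is to split $J_\lambda$ into its distortion term and its refresh charge. I would bound each separately by $2\sqrt{\lambda TV_T}$, with the mandatory initial refresh supplying the extra $\lambda$. This mirrors the two terms $T\eta$ and $\lambda V_T/\eta$ in Theorem~\ref{thm:upper}. Because the threshold now varies with time, each term will be handled by a telescoping or integral comparison rather than by a single fixed $\eta$.

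\emph{Distortion.} At $t=0$ the discrepancy is zero. For $t\ge1$, either the policy refreshes, giving discrepancy $0$, or it keeps its state, in which case $d(z_t,z_{\tau_t})=d(z_t,z_{\tau_{t-1}})\le\eta_t$. Since $S_t\le V_T$ for $t\le T-1$,
\[
\sum_{t=1}^{T-1}\eta_t\le\sqrt{\lambda V_T}\sum_{t=1}^{T-1}t^{-1/2}\le2\sqrt{\lambda V_T(T-1)}\le2\sqrt{\lambda TV_T},
\]
using the standard bound $\sum_{t\le n}t^{-1/2}\le2\sqrt n$.

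\emph{Refresh charge.} List the non-initial refreshes at times $t_1<\dots<t_m$, and put $t_0=0$. Note that $\tau_{t_k-1}=t_{k-1}$. Let $\Delta_k=S_{t_k}-S_{t_{k-1}}$ be the path variation accumulated since the previous refresh. By the triangle inequality and the refresh test, $\Delta_k\ge d(z_{t_k},z_{t_{k-1}})>\eta_{t_k}$. In particular $S_{t_k}>0$, so the division below is legitimate. Hence
\[
\lambda<\lambda\,\frac{\Delta_k}{\eta_{t_k}}=\Delta_k\sqrt{\frac{\lambda t_k}{S_{t_k}}}\le\sqrt{\lambda T}\,\frac{\Delta_k}{\sqrt{U_k}},
\]
where $U_k=\sum_{j\le k}\Delta_j=S_{t_k}$. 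The intervals are disjoint and start at $0$, so indeed $S_{t_k}=U_k$, and this sum is at most $V_T$. The elementary inequality $\Delta/\sqrt U\le2(\sqrt U-\sqrt{U-\Delta})$ for $0<\Delta\le U$ makes the sum telescope:
\[
\sum_k\frac{\Delta_k}{\sqrt{U_k}}\le2\sqrt{U_m}\le2\sqrt{V_T}.
\]
It follows that $\lambda(N-1)\le2\sqrt{\lambda TV_T}$. Adding the two bounds gives \eqref{eq:anytime-cost}. The case $V_T=0$ is trivial: no refresh test can succeed, so $J_\lambda=\lambda$.

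The main obstacle is the refresh charge. The threshold $\eta_{t_k}$ changes over time, so a crude bound such as $\Delta_k/\eta_{\min}$ would fail. The key point is that the normalization $S_{t_k}$ appearing in $\eta_{t_k}$ equals the cumulative sum of the charged increments. That identity is what turns the charge into a telescoping $\sqrt{\cdot}$ sum, and the factor $\sqrt{t_k}\le\sqrt T$ is what keeps the bound anytime.
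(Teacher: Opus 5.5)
Your proposal is correct and follows the paper's proof essentially step for step. You bound the distortion by $\sqrt{\lambda V_T}\sum_{t=1}^{T-1}t^{-1/2}\le 2\sqrt{\lambda TV_T}$, and you charge each non-initial refresh through $\Delta_j>\sqrt{\lambda S_{r_j}/T}$ together with the square-root telescoping $\Delta_j/\sqrt{S_{r_j}}\le 2(\sqrt{S_{r_j}}-\sqrt{S_{r_{j-1}}})$, which gives $\lambda(N-1)\le 2\sqrt{\lambda TV_T}$.
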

\begin{proof}
We bound distortion by the threshold and charge each refresh to newly accumulated variation. A square-root telescoping argument then bounds the refresh count.
Write $V=V_T$. If $T=1$ or $V=0$, the only refresh is the initial one and $J_\lambda=\lambda$. Otherwise, post-decision discrepancy is at most $\eta_t$, so total distortion is at most
\begin{equation}\label{eq:display-13}
 \sqrt{\lambda V}\sum_{t=1}^{T-1}t^{-1/2}\le2\sqrt{\lambda VT}.
\end{equation}
Let $r_0=0<r_1<\cdots<r_m$ be refresh times. For every $j\ge1$,
\begin{equation}\label{eq:display-14}
 \Delta_j:=S_{r_j}-S_{r_{j-1}}
 \ge d(z_{r_j},z_{r_{j-1}})
 >\sqrt{\lambda S_{r_j}/r_j}
 \ge\sqrt{\lambda S_{r_j}/T}.
\end{equation}
Each such endpoint has $S_{r_j}>0$. Thus
\begin{equation}\label{eq:display-15}
 1<\sqrt{T/\lambda}\,\frac{\Delta_j}{\sqrt{S_{r_j}}}
 \le2\sqrt{T/\lambda}\left(\sqrt{S_{r_j}}-\sqrt{S_{r_{j-1}}}\right).
\end{equation}
If $m\ge1$, sum and telescope to get $m<2\sqrt{TV/\lambda}$. If $m=0$, the weak bound holds directly. Consequently the refresh charge is at most $\lambda+2\sqrt{\lambda TV}$, proving the result.
\end{proof}
Distortion and the refresh charge $\lambda(N-1)$ beyond initialization are each at most $2\sqrt{\lambda TV_T}$.

\subsection{Clipped thresholds for a known horizon}
For a known horizon, clipping handles the small- and large-variation regimes using cumulative observed variation, with no age constraint.

\begin{theorem}[Known-horizon adaptation to unknown variation]\label{thm:unknownv}
Let $T\ge2$ and $\lambda>0$ be known. For any metric path define
\begin{equation}\label{eq:clipped-rule}
\begin{gathered}
 S_t=\sum_{j=1}^{t}d(z_j,z_{j-1}),\\
 \eta(v)=\min\{\lambda,\max\{\lambda/T,\sqrt{\lambda v/T}\}\}.
\end{gathered}
\end{equation}
Refresh initially and then exactly when $d(z_t,z_{\tau_{t-1}})>\eta(S_t)$, after observing $z_t$. Write $V=S_{T-1}=V_T$. This single policy, without knowing $V$ in advance, satisfies
\begin{equation}\label{eq:clipped-cost}
\begin{gathered}
 J_\lambda\le\lambda T,\\
 \begin{aligned}
 J_\lambda\le3\min\bigl\{&\lambda T,\ \lambda+(T-1)V,\\
 &\lambda+2\sqrt{\lambda TV}\bigr\}.
 \end{aligned}
\end{gathered}
\end{equation}
More precisely,
\begin{equation}\label{eq:display-16}
 J_\lambda\le
 \begin{cases}
 \lambda+(T-1)V,&0\le V\le\lambda/T,\\
 (3-1/T)\sqrt{\lambda TV},&\lambda/T<V\le\lambda T,\\
 \lambda T,&V>\lambda T.
 \end{cases}
\end{equation}
The middle inequality is strict.
\end{theorem}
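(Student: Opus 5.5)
The plan is to use two elementary facts about the clipped threshold, and then split into the three variation regimes of \eqref{eq:display-16}. First, since $S_t$ is nondecreasing in $t$, so is $\eta(S_t)$, and always $\lambda/T\le\eta(S_t)\le\lambda$. Second, as in Theorem~\ref{thm:anytime}, the post-decision discrepancy at each $t\ge1$ is at most $\eta(S_t)$. It is zero at a refresh and at $t=0$.

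\emph{Universal cap $J_\lambda\le\lambda T$.} I will argue per opportunity. At $t=0$ the cost is $\lambda$. At each later $t$ the policy either refreshes, paying $\lambda$ with zero discrepancy, or holds, paying discrepancy at most $\eta(S_t)\le\lambda$. Summing over the $T$ opportunities gives $\lambda T$, which also settles the third case.

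\emph{Small variation, $V\le\lambda/T$.} Every discrepancy $d(z_t,z_0)$ is at most $S_t\le V\le\lambda/T\le\eta(S_t)$, so the strict test never fires. Only the initial refresh occurs, and $J_\lambda=\lambda+\sum_{t\ge1}d(z_t,z_0)\le\lambda+(T-1)V$.

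\emph{Middle case, $\lambda/T<V\le\lambda T$.} Here $\eta(S_t)\le\max\{\lambda/T,\sqrt{\lambda V/T}\}=\sqrt{\lambda V/T}$, so the distortion is at most $(T-1)\sqrt{\lambda V/T}=(1-1/T)\sqrt{\lambda TV}$. To count refreshes, I will replace the telescoping step of Theorem~\ref{thm:anytime} by an integral potential $\Phi(s)=\int_0^s du/\eta(u)$. On $[0,\lambda T]$ the threshold $\eta$ is nondecreasing and the cap $\lambda$ is inactive, so $1/\eta$ is nonincreasing. For refresh times $r_{j-1}<r_j$, with $\Delta_j=S_{r_j}-S_{r_{j-1}}\ge d(z_{r_j},z_{r_{j-1}})>\eta(S_{r_j})$, this gives
\[
\Phi(S_{r_j})-\Phi(S_{r_{j-1}})\ge\frac{\Delta_j}{\eta(S_{r_j})}>1 .
\]
Hence the number $m$ of non-initial refreshes satisfies $m<\Phi(V)$. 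The potential evaluates to
\[
\Phi(V)=\frac{T}{\lambda}\cdot\frac{\lambda}{T}+2\sqrt{T/\lambda}\bigl(\sqrt V-\sqrt{\lambda/T}\bigr)=2\sqrt{TV/\lambda}-1 .
\]
The refresh charge is therefore strictly less than $\lambda+\lambda(2\sqrt{TV/\lambda}-1)=2\sqrt{\lambda TV}$. Adding the distortion gives the strict bound $(3-1/T)\sqrt{\lambda TV}$. I expect this potential step to be the main obstacle: the floor $\lambda/T$ must exactly absorb the initial $\lambda$, and the strictness of the refresh test must survive the integral comparison. A degenerate case with $m=0$ is handled directly.

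\emph{Factor three.} Finally I check \eqref{eq:clipped-cost} case by case against the three analytical bounds.
\begin{itemize}
\item In the small case, $J_\lambda\le2\lambda$, and every bound in the minimum is at least $\lambda$.
\item In the middle case, $\sqrt{\lambda TV}\le\lambda T$ because $V\le\lambda T$. AM--GM gives $\lambda+(T-1)V\ge2\sqrt{\lambda(T-1)V}\ge\sqrt{\lambda TV}$ for $T\ge2$. Trivially $\sqrt{\lambda TV}\le\lambda+2\sqrt{\lambda TV}$. So $J_\lambda<3\sqrt{\lambda TV}$ is at most three times each bound.
\item In the large case, $\lambda T$ is itself the minimum, since $(T-1)V>\lambda T-\lambda$ and $2\sqrt{\lambda TV}>2\lambda T$.
\end{itemize}
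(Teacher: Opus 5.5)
Your proposal is correct and follows essentially the same route as the paper: the per-opportunity $\lambda$ cap, the integral potential $\int dv/\eta(v)$ combined with the strict trigger to get $m<2\sqrt{TV/\lambda}-1$, the distortion bound $(T-1)\sqrt{\lambda V/T}$, and the same case-by-case comparison for the factor three. The differences are cosmetic. In the small-variation case you compare $J_\lambda\le2\lambda$ with a minimum of at least $\lambda$, whereas the paper notes that $\lambda+(T-1)V$ is itself the minimum. You also do not need the cap to be inactive for $1/\eta$ to be nonincreasing, since $\eta$ is nondecreasing on all of $[0,\infty)$.
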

\begin{proof}
The function $\eta$ is positive, nondecreasing, and at most $\lambda$. At a refresh the current distortion is zero and the charge is $\lambda$; otherwise the distortion is at most $\eta(S_t)\le\lambda$ and there is no charge. Each opportunity therefore costs at most $\lambda$, including initialization, proving $J_\lambda\le\lambda T$. If $r_0=0<r_1<\cdots<r_m$ are refresh times, the triangle inequality and strict trigger imply, for $j\ge1$,
\begin{equation}\label{eq:display-17}
 \Delta_j:=S_{r_j}-S_{r_{j-1}}
 \ge d(z_{r_j},z_{r_{j-1}})>\eta(S_{r_j}).
\end{equation}
Hence
\begin{equation}\label{eq:display-18}
 1<\frac{\Delta_j}{\eta(S_{r_j})}
 \le\int_{S_{r_{j-1}}}^{S_{r_j}}\frac{dv}{\eta(v)}.
\end{equation}
The intervals are disjoint. If $m\ge1$, summing gives $m<I(V)$, where direct integration yields
\begin{equation}\label{eq:display-19}
 I(V)=\int_0^V\frac{dv}{\eta(v)}=
 \begin{cases}
 TV/\lambda,&V\le\lambda/T,\\
 2\sqrt{TV/\lambda}-1,&\lambda/T\le V\le\lambda T,\\
 V/\lambda+T-1,&V\ge\lambda T.
 \end{cases}
\end{equation}
If $V\le\lambda/T$, the initial-state discrepancy never exceeds the threshold, so $N=1$ and $J_\lambda\le\lambda+(T-1)V$. In the middle regime, $N<2\sqrt{TV/\lambda}$; this also holds when $m=0$. Distortion is at most $(T-1)\eta(V)$ because its initial term is zero. This proves the strict middle bound. The global cap $J_\lambda\le\lambda T$ gives the last regime.

For comparison with the minimum, in the first regime its smallest entry is $\lambda+(T-1)V$; in the last it is $\lambda T$. In the middle, put $s=\sqrt{\lambda TV}$. Then $\lambda T\ge s$, $\lambda+2s\ge s$, and
\begin{equation}\label{eq:display-20}
 \lambda+(T-1)V\ge2\sqrt{\lambda(T-1)V}\ge s
\end{equation}
because $T\ge2$. Each entry is therefore at least $s$, establishing the factor-three statement.
\end{proof}
The clipped policy costs no more than always refreshing. The factor-three estimate in \eqref{eq:clipped-cost} uses the minimum of three analytical upper bounds. Thresholds based on observed increments also appear in event-triggered reporting \cite{adaptivesod}.

\subsection{The distortion--refresh trade-off}
\begin{theorem}[Finite ramp calculation]\label{thm:ramp}
Let $z_t=vt$ on the real line with $v>0$. Among schedules with exactly $N\in\{1,\ldots,T\}$ refreshes, define $q=\lfloor T/N\rfloor$ and $r=T-Nq$. The exact minimum for $N$ refreshes and a lower bound for the unrestricted optimum are
\begin{equation}\label{eq:ramp-oracle}
\begin{aligned}
 J^*_{\lambda,N}={}&\lambda N+\frac v2\bigl[rq(q+1)\\
 &\qquad +(N-r)q(q-1)\bigr],\\
 J^*_{\lambda}\ge{}&T\sqrt{2\lambda v}-vT/2.
\end{aligned}
\end{equation}
Here $J^*_{\lambda}=\min_{1\le N\le T}J^*_{\lambda,N}$.
\end{theorem}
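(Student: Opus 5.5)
By the segment decomposition behind \eqref{eq:offline-oracle}, a schedule with exactly $N$ refreshes partitions $\{0,\ldots,T-1\}$ into $N$ contiguous segments. The first segment starts at the mandatory refresh at $0$. A segment of length $\ell$ starting at $s$ holds $z_s=vs$. Its discrepancy is therefore
\[
\sum_{k=0}^{\ell-1}vk=v\,\ell(\ell-1)/2,
\]
and this depends only on $\ell$, not on $s$. So $J^*_{\lambda,N}$ equals $\lambda N$ plus $\frac v2\min\sum_i\ell_i(\ell_i-1)$, where the minimum is over positive integers $\ell_1,\ldots,\ell_N$ with $\sum_i\ell_i=T$.

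The function $g(\ell)=\ell(\ell-1)$ is discretely convex, since $g(\ell+1)-g(\ell)=2\ell$ is increasing. A balancing exchange argument then applies. If two lengths differ by at least $2$, moving one unit from the longer segment to the shorter one changes the sum by $2(\ell_{\rm small}-\ell_{\rm large}+1)\le -2<0$. Hence an optimum has all lengths in $\{q,q+1\}$. The counts are forced to be $r$ segments of length $q+1$ and $N-r$ of length $q$. This gives the first display exactly.

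For the lower bound on $J^*_\lambda$, I will avoid the integer structure altogether. By convexity (Jensen or Cauchy--Schwarz), $\sum_i\ell_i^2\ge T^2/N$, so
\[
\sum_i\ell_i(\ell_i-1)\ge T^2/N-T .
\]
It follows that
\[
J^*_{\lambda,N}\ge\lambda N+\frac{vT^2}{2N}-\frac{vT}{2}.
\]
By AM--GM, $\lambda N+vT^2/(2N)\ge2\sqrt{\lambda vT^2/2}=T\sqrt{2\lambda v}$ for every real $N>0$. Minimizing over $N$ then yields $J^*_\lambda\ge T\sqrt{2\lambda v}-vT/2$.

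The one step that needs care is the exchange argument. It has to respect positivity of the lengths, which holds automatically because the shorter segment only grows. It also has to use the fact that the segment start is irrelevant on a linear ramp, since refreshes may only install the current observation. Everything else is routine algebra.
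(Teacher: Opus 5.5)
Your proposal is correct and follows the paper's proof essentially step for step. Both arguments decompose the schedule into $N$ segments with distortion $v\ell(\ell-1)/2$ each, balance the lengths by a unit exchange to get $r$ lengths $q+1$ and $N-r$ lengths $q$, and obtain the lower bound from $\sum_j\ell_j^2\ge T^2/N$ followed by AM--GM in $N$.
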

\begin{proof}
The refreshes partition the horizon into positive integer lengths $\ell_1,\ldots,\ell_N$ totaling $T$. A segment of length $\ell$ has distortion $v\ell(\ell-1)/2$. If two lengths differ by at least two, moving one point from the longer to the shorter decreases their total distortion. Optimal lengths therefore differ by at most one: $r$ lengths are $q+1$, the rest $q$. This gives the exact expression. Finally,
\begin{equation}\label{eq:ramp-proof}
\begin{aligned}
 J_\lambda&=\lambda N+\tfrac v2\Bigl(\sum_j\ell_j^2-T\Bigr)\\
 &\ge\lambda N+\frac{vT^2}{2N}-\frac{vT}{2}\\
 &\ge T\sqrt{2\lambda v}-\frac{vT}{2}.
\end{aligned}
\end{equation}
\end{proof}

\begin{corollary}[Minimax square-root order]\label{cor:minimax}
Consider real-valued paths with $T\ge2$ and variation at most $V>0$. Put $v=V/(T-1)$. If $v\le\lambda\le vT^2$, then even allowing randomized causal policies, with known $V,T,\lambda$,
\begin{equation}\label{eq:minimax-order}
\begin{aligned}
 & (\sqrt2-\tfrac12)\sqrt{\lambda VT}\\
 &\quad\le\inf_\pi\sup_{z:V_T\le V}\mathbb E[J_\lambda(\pi,z)]\\
 &\quad\le(2+\sqrt2)\sqrt{\lambda VT}.
\end{aligned}
\end{equation}
\end{corollary}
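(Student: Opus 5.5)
The upper bound comes from Theorem~\ref{thm:upper} with the known budget $\overline V=V$. The lower bound uses the ramp $z_t=vt$, where any policy, even randomized, pays at least the offline optimum $J^*_\lambda$, and Theorem~\ref{thm:ramp} bounds that optimum from below.

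\emph{Upper bound.} I invoke Theorem~\ref{thm:upper} with the threshold $\eta=\sqrt{\lambda V/T}$. On every path with $V_T\le V$ it gives
\begin{equation*}
 J_\lambda\le\lambda+2\sqrt{\lambda TV}.
\end{equation*}
To absorb the additive $\lambda$, I need $\lambda\le\sqrt2\sqrt{\lambda VT}$, which is equivalent to $\lambda\le 2VT$. Since $V=v(T-1)$, we have $2VT=2vT(T-1)$. For $T\ge2$, $2T(T-1)\ge T^2$, so $2VT\ge vT^2\ge\lambda$ by hypothesis. This yields the constant $2+\sqrt2$. The policy is deterministic, so the infimum over randomized policies is no larger.

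\emph{Lower bound.} Take the fixed path $z_t=vt$, $t=0,\ldots,T-1$. Its variation is exactly $v(T-1)=V$, so it is admissible. For any randomized causal policy, each realization of its coins produces a legal single-cache schedule with current-observation resets. Its cost is therefore at least $J^*_\lambda(z)$, and so is the expectation. Theorem~\ref{thm:ramp} then gives
\begin{equation*}
 \mathbb E[J_\lambda]\ge T\sqrt{2\lambda v}-\frac{vT}{2}.
\end{equation*}

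It remains to compare this with $\sqrt{\lambda VT}=\sqrt{\lambda v T(T-1)}\le T\sqrt{\lambda v}$.
\begin{itemize}
\item For the first term, $T\sqrt{2\lambda v}=\sqrt2\,T\sqrt{\lambda v}\ge\sqrt2\sqrt{\lambda VT}$.
\item For the correction term, the hypothesis $v\le\lambda$ gives $v\le\sqrt{\lambda v}$. Hence $vT/2\le T\sqrt{\lambda v}/2$.
\end{itemize}
The delicate step is that this correction must be bounded by $\tfrac12\sqrt{\lambda VT}$, but $T\sqrt{\lambda v}$ exceeds $\sqrt{\lambda VT}$ by the factor $\sqrt{T/(T-1)}$.

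To handle this, I keep the first term exact rather than bounding it, so the target inequality is
\begin{equation*}
 T\sqrt{2\lambda v}-\frac{vT}{2}\ge\bigl(\sqrt2-\tfrac12\bigr)\sqrt{\lambda vT(T-1)}.
\end{equation*}
Dividing by $T\sqrt{\lambda v}$ and setting $\rho=\sqrt{v/\lambda}\le1$ and $\kappa=\sqrt{(T-1)/T}<1$, it reduces to
\begin{equation*}
 \sqrt2-\frac{\rho}{2}\ge\bigl(\sqrt2-\tfrac12\bigr)\kappa .
\end{equation*}
The left side is at least $\sqrt2-\tfrac12$ because $\rho\le1$. The right side is strictly smaller because $\kappa<1$. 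So the inequality holds, and the slack from $\kappa<1$ covers the gap.

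The upper limit $\lambda\le vT^2$ is used only in the upper bound. The condition $v\le\lambda$ is used only in the lower bound.
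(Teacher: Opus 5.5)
Your proof is correct and follows the paper's argument. The upper side uses the threshold bound $\lambda+2\sqrt{\lambda TV}$ together with $\lambda\le\sqrt2\sqrt{\lambda VT}$, and the lower side applies the ramp's offline bound to every coin outcome, with your normalized inequality $\sqrt2-\rho/2\ge(\sqrt2-\tfrac12)\kappa$ being exactly the paper's chain $T\sqrt{2\lambda v}-vT/2\ge(\sqrt2-\tfrac12)T\sqrt{\lambda v}\ge(\sqrt2-\tfrac12)\sqrt{\lambda VT}$. The step you flagged as delicate therefore needs no extra care: the correction term need not be bounded separately against $\sqrt{\lambda VT}$.
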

\begin{proof}
The threshold upper bound is $\lambda+2\sqrt{\lambda TV}$. In this regime $\lambda/\sqrt{\lambda TV}\le\sqrt{T/(T-1)}\le\sqrt2$, yielding the stated upper constant. For the lower bound use the ramp with variation exactly $V$. Since $\lambda\ge v$, Theorem~\ref{thm:ramp} is at least $(\sqrt2-1/2)T\sqrt{\lambda v}\ge(\sqrt2-1/2)\sqrt{\lambda VT}$. The bound holds for every offline partition, and hence for every outcome of a randomized causal policy.
\end{proof}
The ramp bound in \eqref{eq:ramp-oracle}, and hence the lower side of \eqref{eq:minimax-order}, holds even for offline schedules. It describes the distortion--refresh trade-off shared by all policies. The threshold rule matches this worst-case order in the stated regime.

\paragraph{A burst followed by a plateau}
Consider $T=64$, $\lambda=1$, and a four-step burst followed by a plateau: $z_t=V\min\{t,4\}/4$. For $V=1/2$, the anytime policy refreshes at $0,3,33$. Once the burst ends, it retains $3/8$ until its shrinking threshold falls strictly below the residual $1/8$. Equality at $t=32$ causes no refresh; $t=33$ does. The clipped policy refreshes at $0,1,2,3,4$ and incurs no distortion. Their costs are $7$ and $5$, respectively, against the exact offline cost $11/4$. At $V=2$, the anytime cost is $4$ and the clipped cost is $5$; the offline cost is $4$. Figure~\ref{fig:analytic} separates distortion from refresh charges; the two paths reverse the policies' cost ordering.

\begin{figure*}[t]
\centering
\includegraphics[width=\textwidth]{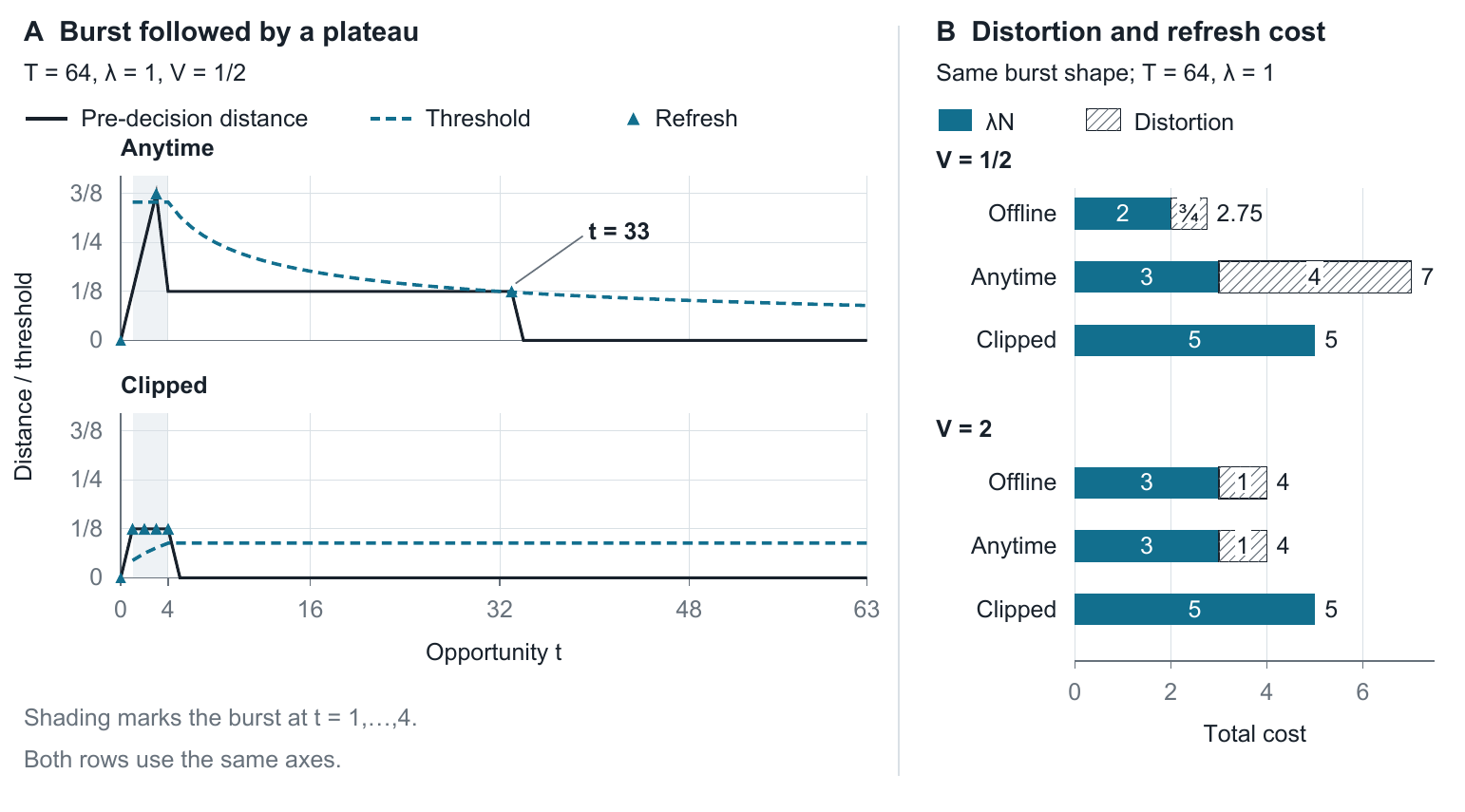}
\caption{Computed costs and update decisions on burst--plateau paths $z_t=V\min\{t,4\}/4$, with $T=64$ and $\lambda=1$. (A) Before-decision distance and threshold for the anytime and clipped rules at $V=1/2$. Markers denote refreshes, after which discrepancy is zero. The strict anytime rule preserves equality at $t=32$ and refreshes at $t=33$. (B) Refresh charge $\lambda N$ and cumulative post-decision distortion at $V=1/2$ and $V=2$. Offline costs use \eqref{eq:offline-oracle}. All costs include initialization.}
\label{fig:analytic}
\end{figure*}

\paragraph{Seeded path families}
We also evaluate 72 distinct paths from random walks, burst--plateau processes, and switching regimes. Each family uses $T\in\{32,64,128\}$ and seeds $0,\ldots,7$, with charges $\lambda\in\{1/16,1/4,1,4\}$, giving 288 path--charge cases and 1,728 cost evaluations, including the offline oracle. The code specifies the generators, seeds, and charges and reports every case. Table~\ref{tab:benchmark} compares the fixed threshold, anytime and clipped rules, always refreshing, and retaining only the initial state against the exact offline oracle. All anytime and clipped costs satisfy their analytical bounds. Clipped costs are lower than anytime in 23 cases, equal in 113, and higher in 152. Per-case records separate distortion and refresh charges and also report $J_\lambda/\sqrt{\lambda TV_T}$.

\begin{table*}[t]
\caption{Synthetic-path cost ratios $J_\lambda/J_\lambda^*$. Each entry is the median (maximum) over 96 path--charge cases per family. The fixed threshold uses the final variation. Costs are computed exactly and ratios are rounded to three decimals.}
\label{tab:benchmark}
\centering\footnotesize
\begin{tabular}{lrrr}
\toprule
Policy & Random walk & Burst--plateau & Regime switches \\
\midrule
Fixed threshold (known $V_T$) & 1.147 (1.625) & 1.071 (3.111) & 1.015 (1.727) \\
Anytime & 1.163 (1.750) & 1.000 (2.684) & 1.061 (2.045) \\
Clipped & 1.585 (2.447) & 1.071 (3.921) & 1.537 (3.184) \\
Always refresh & 4.246 (19.692) & 16.000 (61.134) & 4.453 (23.953) \\
Initial state only & 4.053 (27.313) & 36.875 (1590.250) & 7.013 (52.711) \\
\bottomrule
\end{tabular}
\end{table*}

\subsection{Fixed thresholds and competitive ratios}
A fixed threshold can satisfy a variation bound while having an unbounded refresh-count ratio.
\begin{proposition}[An unbounded update-count ratio]\label{prop:counter}
Under the hard metric-radius constraint and without an effective age cap, the strict threshold policy need not be constant-competitive in refresh count against the offline oracle.
\end{proposition}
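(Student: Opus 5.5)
The plan is to exhibit, for each integer $n\ge1$, an explicit real-valued path on which the strict threshold policy at radius $\eta$ refreshes about $n$ times, while some offline schedule with the same current-observation reset rule stays within radius $\eta$ using only two refreshes. The mechanism is the reverse of Theorem~\ref{thm:hardimpossible}. There online misses the useful center. Here the deterministic threshold policy is forced to install an \emph{extreme} point, namely the first observation that violates the radius, whereas offline could have waited and installed a central point.

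\textbf{Construction.} Take
\[
z_0=0,\qquad z_1=\eta/2,
\]
then alternate between $3\eta/2$ and $\eta/2$ for $2n$ further steps; append no age constraint (or take $K>T$, so the age rule never fires). The strict policy refreshes at $0$ and keeps $0$ at $z_1=\eta/2$. At $z_2=3\eta/2$ it has $d>\eta$, so it refreshes to $3\eta/2$. At the next $\eta/2$ it retains, since $d=\eta$ is not strictly larger than $\eta$; this is where strictness matters.

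That single example yields only a bounded number of refreshes, so the alternation has to be widened so that the refreshed extreme point itself becomes bad. Use the three values $0$, $\eta$, $2\eta$, with prefix $z_0=0$, $z_1=\eta$. Then repeat the block $2\eta,\,0$ exactly $n$ times.
\begin{itemize}
\item[] Online: it refreshes at $0$, retains at $z_1=\eta$ (distance $\eta$, not strict), refreshes at $2\eta$ (distance $2\eta$), then at $0$ (distance $2\eta$), and so on. This gives $N=1+2n$.
\item[] Offline: it refreshes at $0$ and at $z_1=\eta$, then holds $\eta$. Every later observation is at distance exactly $\eta$, which is feasible. One refresh is impossible because $|2\eta-0|>\eta$, so $N^*_\eta=2$.
\end{itemize}
The ratio is therefore $(1+2n)/2\to\infty$ as $T=2+2n$ grows. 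I will verify these counts step by step; this is routine.

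\textbf{Main obstacle.} The delicate point is confirming that the threshold policy really declines to refresh at $z_1$. This requires $d(z_1,z_0)=\eta$ to fail the strict test; by contrast, $2\eta$ clearly passes it at every suffix step. For robustness against the knife edge, I will also give a variant with slack: use $z_1=\eta$ and suffix values $0$ and $2\eta$ replaced by $-\epsilon'$ and $2\eta-\epsilon$ chosen so that offline's center is within $\eta$ of both, e.g.\ values $\{0,\,0.9\eta,\,1.8\eta\}$. Here online retains $0$ at $0.9\eta$, then is forced back and forth between $0$ and $1.8\eta$, since each step exceeds $\eta$. Offline holds $0.9\eta$ at margin $0.1\eta$. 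Finally, I will note that the same argument defeats any age cap $K$ larger than the horizon, which is what "without an effective age cap" means.
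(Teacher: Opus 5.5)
Your main construction is correct and is essentially the paper's: a prefix $0$, then a central point that the strict policy skips, then alternation between $0$ and a far extreme, giving $T-1$ online refreshes against two offline refreshes that hold the central point.

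The one difference is that your main path sits on the boundary. The skip at $z_1=\eta$ and offline feasibility at distance exactly $\eta$ both rely on the strict trigger and non-strict feasibility convention. The paper's path instead uses $z_1=\eta/2$ and extremes $\eta+\epsilon$ with $\epsilon<\eta/2$, which has slack on both sides, as does your $\{0,0.9\eta,1.8\eta\}$ variant. Drop the abandoned first attempt from the write-up.
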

\begin{proof}
Choose $\eta>0$ and $0<\epsilon<\eta/2$. Let
\begin{equation}\label{eq:display-21}
 z_0=0,\quad z_1=\eta/2,\quad z_2=\eta+\epsilon,\quad
 z_3=0,\quad z_4=\eta+\epsilon,\ldots
\end{equation}
The threshold policy does not refresh at $1$ and then refreshes at every $t\ge2$, so it uses $T-1$ refreshes. The offline schedule refreshes at $0$ and $1$. Its retained midpoint is within $\eta$ of both later values. One refresh is infeasible because of $z_2$, so the optimum is exactly two. The ratio $(T-1)/2$ diverges.
\end{proof}
The variation of this path grows with its length, so the accounting bound in Theorem~\ref{thm:variation} continues to hold as the refresh-count ratio diverges. Theorem~\ref{thm:hardimpossible} gives the stronger separation for every feasible causal policy. A fixed age cap $K$ changes the comparison: every offline segment has at most $K$ opportunities, giving a refresh-count ratio bound of $K$.

\section{From state drift to output error}\label{sec:fidelity}
We choose the tracking metric so that state discrepancy controls the output change at a fixed current input. Output and value norms are fixed on their respective normed spaces. Assume a fixed known $L>0$ such that
\begin{equation}\label{eq:uniform-sensitivity}
\|F_t(x_t,A(z))-F_t(x_t,A(z'))\|\le L\,d(z,z')
\end{equation}
at every opportunity for all admissible state pairs and current inputs. Let
\begin{equation}\label{eq:display-22}
 E_{\mathrm{out}}=\sum_{t=0}^{T-1}\|y_t^{\rm ca}-y_t^{\rm fr}\|.
\end{equation}
For $\lambda>0$, use Theorem~\ref{thm:anytime} with tracking charge $\lambda/L$, equivalently threshold $\sqrt{\lambda S_t/(Lt)}$. With no age constraint,
\begin{equation}\label{eq:outputbridge}
 E_{\mathrm{out}}+\lambda N
 \le L J_{\lambda/L}
 \le \lambda+4\sqrt{\lambda LTV_T}.
\end{equation}
Choose $L$ for the stated input domain and configure the policy with it before execution.

When the operator space $\mathcal A$ is normed, a sufficient condition is the factorization
\begin{equation}\label{eq:display-23}
\begin{gathered}
 \|F_t(x_t,a)-F_t(x_t,b)\|\le L_{F,t}\|a-b\|,\\
 \|A(z)-A(z')\|\le L_A d(z,z'),
\end{gathered}
\end{equation}
Here $L_A$ and $L_{F,t}$ are the constructor and readout Lipschitz constants; a known uniform bound on $L_{F,t}L_A$ supplies $L$. A scalar loss that is $L_{\ell,t}$-Lipschitz in the output norm adds the factor $L_{\ell,t}$ to the loss-difference bound.

Figure~\ref{fig:certificate} traces the distance-softmax bounds and shows how extra zero coordinates dilute RMS without changing the output.

\begin{figure*}[t]
\centering
\includegraphics[width=\textwidth]{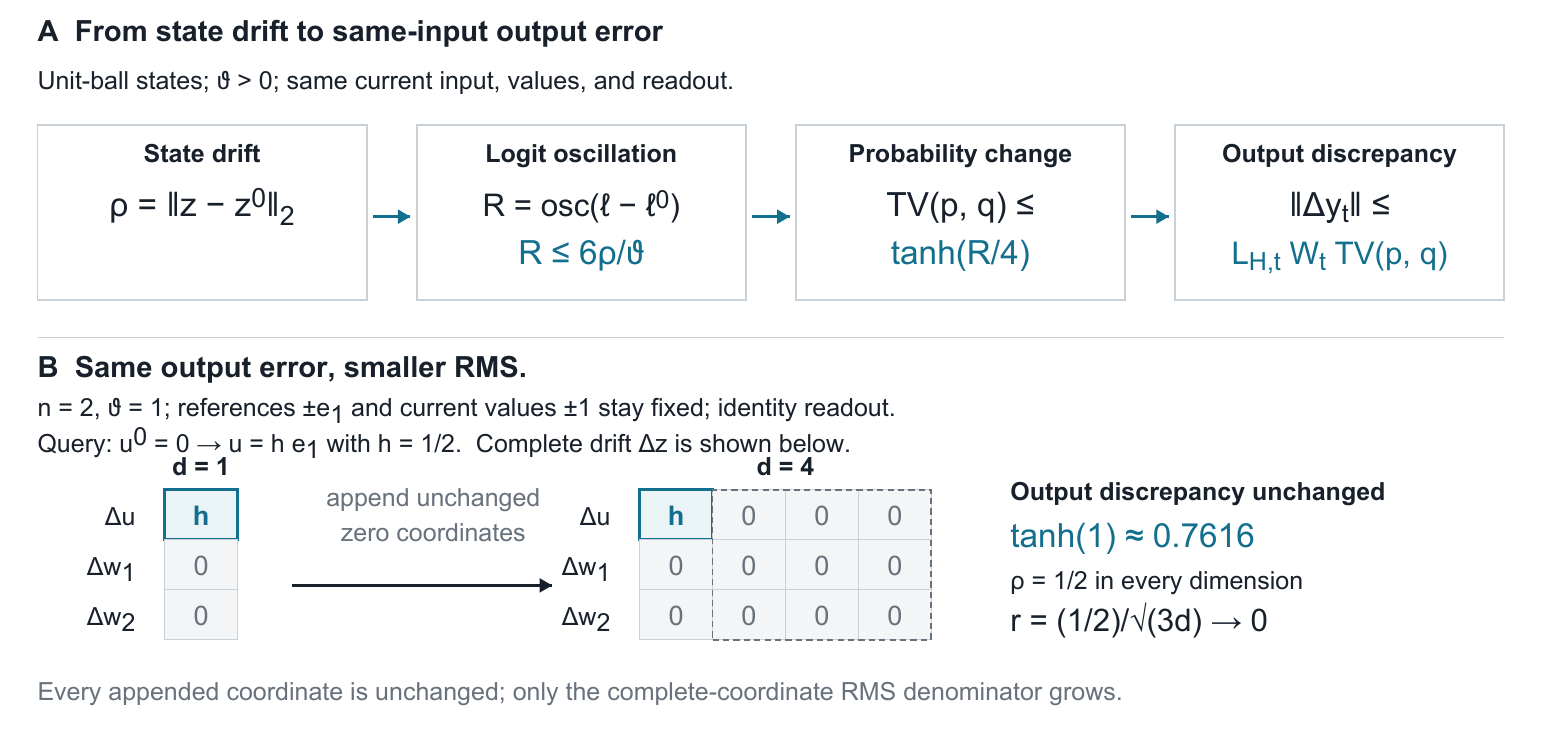}
\caption{Same-input output bounds and RMS dilution. (A) On the product of unit balls, $R\le6\rho/\vartheta$, then $\TV(p,q)\le\tanh(R/4)$ and output discrepancy at most $L_{H,t}W_t\TV(p,q)$. If every query and reference at both endpoints has norm one, $2\sqrt6$ replaces $6$. Each arrow is an upper bound. (B) Proposition~\ref{prop:rms} with $n=2$, $h=1/2$, and $\vartheta=1$: adding unused zero coordinates leaves output discrepancy $\tanh(1)$ fixed while $r=1/(2\sqrt{3d})$ tends to zero.}
\label{fig:certificate}
\end{figure*}

\subsection{Mixture and softmax bounds}
For an integer $n\ge1$, let $\Delta_n=\{a\in\mathbb R^n:a_i\ge0,\ \sum_i a_i=1\}$ be the probability simplex. Consider mixture weights $a(z)\in\Delta_n$ and current values $v_{t,1},\ldots,v_{t,n}$ in a normed space. Define
\begin{equation}\label{eq:display-24}
\begin{gathered}
 m_t(a)=\sum_i a_i v_{t,i},\qquad W_t=\max_{i,j}\|v_{t,i}-v_{t,j}\|,\\
 \TV(p,q)=\tfrac12\|p-q\|_1.
\end{gathered}
\end{equation}
\begin{lemma}[Current-value diameter]\label{lem:diam}
For $p,q\in\Delta_n$,
\begin{equation}\label{eq:display-25}
 \|m_t(p)-m_t(q)\|\le W_t\TV(p,q).
\end{equation}
The coefficient is sharp in general.
\end{lemma}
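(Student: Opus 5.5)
The plan is to use the standard coupling decomposition of $p-q$ into positive and negative parts. Put $\delta=\TV(p,q)$. If $\delta=0$, then $p=q$ and there is nothing to prove, so assume $\delta>0$. Write $p-q=u-w$ with $u=(p-q)_+$ and $w=(q-p)_+$ taken coordinatewise. Since $\sum_i(p_i-q_i)=0$, both $u$ and $w$ are nonnegative vectors with total mass $\delta$. Then
\[
m_t(p)-m_t(q)=\sum_i u_i v_{t,i}-\sum_j w_j v_{t,j}.
\]
Multiplying by $\sum_j w_j/\delta=1$ and by $\sum_i u_i/\delta=1$ respectively gives
\[
m_t(p)-m_t(q)=\frac1\delta\sum_{i,j}u_iw_j\,(v_{t,i}-v_{t,j}).
\]
The triangle inequality in the normed value space then bounds the norm by $\frac1\delta\sum_{i,j}u_iw_jW_t=\delta W_t$, which is the claimed inequality. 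The only care needed is the mass identity $\sum_i u_i=\sum_j w_j=\delta$, which follows from $p,q\in\Delta_n$.

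For sharpness, it suffices to exhibit, for given $n\ge2$, a normed space and values attaining equality with $\TV(p,q)>0$. Pick indices $i\ne j$ achieving $W_t=\|v_{t,i}-v_{t,j}\|$ (a maximum exists since $n$ is finite), and take $p=e_i$, $q=e_j$. Then $\TV(p,q)=1$ and $\|m_t(p)-m_t(q)\|=\|v_{t,i}-v_{t,j}\|=W_t$, so the coefficient $W_t$ cannot be reduced uniformly. More generally, $p=(1-s)e_j+se_i$ and $q=e_j$ give equality for every $\TV(p,q)=s\in[0,1]$. If $n=1$, both sides vanish, so the statement is trivial. Here ``sharp in general'' means the constant is attained for some configuration, not for every pair.

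I expect no genuine obstacle. The only step requiring attention is the rewriting of the difference as a double sum weighted by $u_iw_j/\delta$. This is where the normalization by $\delta$ must be handled correctly, and where one must use differences $v_{t,i}-v_{t,j}$ rather than norms of the individual values. Otherwise one would obtain the weaker bound $2\max_i\|v_{t,i}\|\,\TV(p,q)$, which is not translation invariant.
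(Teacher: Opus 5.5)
Your proof is correct and follows the paper's argument: the paper also splits $p-q$ into positive and negative parts, each of mass $\TV(p,q)$, rewrites $m_t(p)-m_t(q)$ as a product-weighted double sum of differences $v_{t,i}-v_{t,j}$, and bounds that sum with the triangle inequality. Your sharpness witness, point masses at a pair attaining $W_t$, is a mild generalization of the paper's example with two scalar values $0$ and $W_t$.
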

\begin{proof}
Write the positive and negative parts of $p-q$ as $\alpha r$ and $\alpha s$, where $\alpha=\TV(p,q)$ and $r,s$ are distributions. For $\alpha>0$, the difference is $\alpha\sum_{i,j}r_i s_j(v_{t,i}-v_{t,j})$, whose norm is at most $\alpha W_t$. The zero case is immediate. Two scalar values $0,W_t$ attain equality.
\end{proof}

For finite logit vectors $\ell,\ell^0\in\mathbb R^n$, define $[\softmax(\ell)]_i=e^{\ell_i}/\sum_j e^{\ell_j}$. Let $p=\softmax(\ell)$ and $q=\softmax(\ell^0)$, and write $R=\osc(\ell-\ell^0)=\max_i(\ell_i-\ell_i^0)-\min_i(\ell_i-\ell_i^0)$. The sharp Hilbert-distance comparison \cite{reeb,cohen} gives the following bound.
\begin{lemma}[Standard projective-metric comparison]\label{lem:tv}
\begin{equation}\label{eq:projective-tv}
\TV(p,q)\le\tanh(R/4).
\end{equation}
For every $R\ge0$, a pair of two-category logit vectors attains equality. Here $R$ is the Hilbert projective distance $\osc(\log p-\log q)$.
\end{lemma}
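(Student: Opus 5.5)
Set $u=\ell-\ell^0$ and reduce to a one-parameter comparison. Softmax is invariant under adding a constant to the logits, so I may shift $u$ so that $\min_i u_i=0$ and $\max_i u_i=R$. Then $p_i=q_ie^{u_i}/Z$ with $Z=\sum_j q_je^{u_j}=\mathbb E_q[e^{u}]$. This expresses $p$ as an exponential tilt of $q$ by a function with range in $[0,R]$. Since $\log p-\log q=u-\log Z$, the identity $R=\osc(\log p-\log q)$ is immediate.

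The total variation is $\TV(p,q)=\sum_i (p_i-q_i)_+=\max_{S}\,(p(S)-q(S))$. Fix the optimal set $S$ and put $s=q(S)$. The key step is an extremal argument. For fixed $s$, the quantity
\[p(S)-q(S)=\frac{\sum_{i\in S}q_ie^{u_i}}{\sum_j q_je^{u_j}}-s\]
is maximized by taking $u_i=R$ on $S$ and $u_i=0$ off $S$. Raising $u_i$ on $S$ increases the ratio, since it has the form $x/(x+y)$ with $x$ increasing. Lowering $u_j$ off $S$ decreases $y$. This collapses the problem to two categories, giving
\[\TV(p,q)\le\max_{s\in[0,1]}\Bigl(\frac{se^R}{se^R+1-s}-s\Bigr)=\max_s\frac{s(1-s)(e^R-1)}{se^R+1-s}.\]

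Maximize the scalar function next. Set the derivative to zero. Writing $k=e^R$, the stationarity condition is $(1-2s)(sk+1-s)=s(1-s)(k-1)$. This simplifies to $s^2(k-1)+2s-1=0$, whose root in $[0,1]$ is $s^*=1/(1+\sqrt k)$. Substituting gives the value $(\sqrt k-1)/(\sqrt k+1)=\tanh(R/4)$. This is the main computational step, and I expect it to be routine once the root is guessed. Alternatively, the inequality could be cited directly from the Birkhoff--Hopf/Hilbert-metric contraction results \cite{reeb,cohen}.

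For sharpness, take $n=2$ with $q=(s^*,1-s^*)$ and logits $\ell^0=(\log s^*,\log(1-s^*))$, and set $\ell=\ell^0+(R,0)$. Then every inequality in the chain is an equality: both sets $S=\{1\}$ and the tilt are extremal. Consequently $\TV(p,q)=\tanh(R/4)$, and the oscillation of $\ell-\ell^0$ is exactly $R$. At $R=0$ this reduces to $p=q$, which is trivially tight.

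The only delicate point is justifying the extremal reduction over all $u$ and $S$ simultaneously. I would handle it by fixing $S$ as the positive-part set of $p-q$, where the monotonicity argument applies coordinatewise.
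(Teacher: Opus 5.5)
Your proof is correct, but the key reduction is different from the paper's. The paper writes $r_i=e^{\ell_i-\ell_i^0}$, $Z=\mathbb E_q[r]$, and $\TV(p,q)=\mathbb E_q[(r-Z)_+]/Z$. It then fixes the mean $Z$ and uses convexity of $x\mapsto(x-Z)_+$ on $[m,M]$ (the chord bound) to get $\TV(p,q)\le (M-Z)(Z-m)/(Z(M-m))$, which it maximizes at $Z=\sqrt{mM}$.

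You instead use the set characterization $\TV=\max_S(p(S)-q(S))$, fix $S$, and push the tilt to its extremes using only the monotonicity of $x/(x+y)$, with no convexity. The two scalar problems are in fact identical. After your normalization $m=1$, $M=k=e^R$, the substitution $Z=1+s(k-1)$ turns the paper's function into exactly your $g(s)=s(1-s)(k-1)/(sk+1-s)$. Your optimizer $s^*=1/(1+\sqrt k)$ corresponds to the paper's $Z=\sqrt k=\sqrt{mM}$.

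The intermediate bounds differ on a given instance, since yours is evaluated at $s=q(S^+)$ and the paper's at the realized normalizer; both have the same supremum. Your route is more elementary and exhibits the extremal two-level tilt directly. The paper's is shorter, states the extremal condition as a geometric-mean condition on $Z$, and shows the equality case as a two-point law with that mean.

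The ``delicate point'' you flag at the end is not actually delicate. The bound $p(S)-q(S)\le g(q(S))$ holds for every $S$ and every tilt with range in $[0,R]$. Taking $S$ to be the positive-part set of $p-q$ therefore finishes the argument without any simultaneous optimization over $u$ and $S$.
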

\begin{proof}
Put $r_i=e^{\ell_i-\ell_i^0}$, $m=\min_i r_i$, $M=\max_i r_i$, and $Z=\sum_i q_i r_i$. In $\mathbb E_q$, $r$ takes value $r_i$ with probability $q_i$; write $x_+=\max\{x,0\}$. For $M>m$, convexity of the positive part gives
\begin{equation}\label{eq:display-26}
\begin{aligned}
 \TV(p,q)&=\frac{\mathbb E_q[(r-Z)_+]}{Z}\\
 &\le\frac{(M-Z)(Z-m)}{Z(M-m)}\\
 &\le\frac{\sqrt M-\sqrt m}{\sqrt M+\sqrt m}.
\end{aligned}
\end{equation}
The last maximum occurs at $Z=\sqrt{mM}$, and the last ratio is $\tanh(\log(M/m)/4)=\tanh(R/4)$. A two-point distribution supported at $m,M$ with this mean attains equality. If $M=m$, both distributions coincide.
\end{proof}

\begin{proposition}[Certificate for a current readout]\label{prop:cert}
If $H_t$ is $L_{H,t}$-Lipschitz and $y_t(a)=H_t(m_t(a))$, then
\begin{equation}\label{eq:readout-certificate}
\|y_t(p)-y_t(q)\|\le L_{H,t}W_t\tanh(R/4).
\end{equation}
A loss difference has the additional factor $L_{\ell,t}$ whenever its Lipschitz assumption holds. For a linear readout, one can instead apply Lemma~\ref{lem:diam} to its transformed current values, often giving a tighter certificate.
\end{proposition}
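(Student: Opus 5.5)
The bound is a composition of two earlier results, connected by a Lipschitz step, and I would prove it in three moves. First, since $H_t$ is $L_{H,t}$-Lipschitz from the value space to the output space,
\[
\|y_t(p)-y_t(q)\|=\|H_t(m_t(p))-H_t(m_t(q))\|\le L_{H,t}\,\|m_t(p)-m_t(q)\|.
\]
This requires $p,q\in\Delta_n$. For $p=\softmax(\ell)$ and $q=\softmax(\ell^0)$ with finite logits, both are strictly positive probability vectors, so this holds automatically.

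Next, Lemma~\ref{lem:diam} applies directly to the same current values $v_{t,i}$. It gives $\|m_t(p)-m_t(q)\|\le W_t\TV(p,q)$. Lemma~\ref{lem:tv} then bounds $\TV(p,q)\le\tanh(R/4)$, where $R=\osc(\ell-\ell^0)$. Chaining the three inequalities yields \eqref{eq:readout-certificate}. The only care point is that all three bounds are evaluated at the same current input and values at time $t$. This is why $W_t$ uses current values: the logits may come from stale and fresh states, but the mixture is taken over $v_{t,\cdot}$ in both cases.

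For the loss statement, suppose the scalar loss is $L_{\ell,t}$-Lipschitz in the output norm. Then the loss difference is at most $L_{\ell,t}\|y_t(p)-y_t(q)\|$, which multiplies the certificate by $L_{\ell,t}$.

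For the linear-readout remark, suppose $H_t(m)=Bm$ with $B$ linear, possibly plus a constant. Then $y_t(a)=\sum_i a_i Bv_{t,i}$, which is itself a mixture of the transformed values $Bv_{t,i}$. Lemma~\ref{lem:diam} therefore gives the bound $\max_{i,j}\|B(v_{t,i}-v_{t,j})\|\,\tanh(R/4)$. This is never larger than $\|B\|W_t\tanh(R/4)$, and it is often smaller.

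None of the steps is a genuine obstacle. The one thing to state explicitly is the scope of the word \emph{certificate}. It is an upper bound valid for the given pair of logit vectors. Sharpness of the composite bound is not claimed; only the factors $W_t$ and $\tanh(R/4)$ are individually sharp, by the earlier lemmas.
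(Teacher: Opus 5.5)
Your proof is correct and follows the paper's intended argument: the paper gives no separate proof, treating the bound as the Lipschitz step for $H_t$ composed with Lemma~\ref{lem:diam} and Lemma~\ref{lem:tv}, and it handles the loss factor and the linear-readout remark just as you do. One small point of placement: the condition $p,q\in\Delta_n$ is needed for Lemma~\ref{lem:diam}, not for the Lipschitz step.
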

The diameter is invariant under a common translation of the values, which can make a maximum value norm a loose bound.

\begin{proposition}[No value-free output guarantee]\label{prop:no}
Without a bound on current-value sensitivity, arbitrarily small logit change cannot guarantee any fixed output-error tolerance.
\end{proposition}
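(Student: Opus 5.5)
The plan is to build an explicit counterexample family. For every $\epsilon>0$ and every tolerance $\varepsilon_0>0$, we exhibit a mixture readout in which the logit change is at most $\epsilon$, yet the output discrepancy at the same current input exceeds $\varepsilon_0$. The mechanism comes from Lemma~\ref{lem:diam} and Proposition~\ref{prop:cert}: both bound the output error by $W_t$ times a total-variation factor. If $W_t$ is unconstrained, the factor $\tanh(R/4)>0$ can be multiplied by an arbitrarily large diameter. So the proof only needs a witness showing that the missing value bound is essential.

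The simplest witness uses two categories, $n=2$, with logits $\ell^0=(0,0)$ and $\ell=(\epsilon,0)$. Then $R=\epsilon$ and $q=(1/2,1/2)$. For the values, take scalar current values $v_{t,1}=0$ and $v_{t,2}=W$ with identity readout $H_t$, so that $y_t(a)=m_t(a)=Wa_2$. The output difference is then $W|p_2-q_2|=W\,\TV(p,q)$. By the equality case in Lemma~\ref{lem:tv} this equals $W\tanh(\epsilon/4)$, and one can also check it directly as $W(1/2-1/(1+e^{\epsilon}))$. This quantity is strictly positive for every $\epsilon>0$.

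Choosing $W>\varepsilon_0/\tanh(\epsilon/4)$ then makes the discrepancy exceed $\varepsilon_0$. Because $\epsilon$ and $\varepsilon_0$ are arbitrary, no function of the logit change alone can certify a fixed tolerance. If the statement is read in the paper's tracking setting, where states $z,z'$ with small $d(z,z')$ generate the logits, I would embed the example by letting the summary metric be the oscillation of the logit difference.

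The one point needing care is what ``guarantee'' quantifies over. The example must hold the logits, the readout Lipschitz constant $L_{H,t}=1$, and the input fixed, while varying only the current values. This isolates current-value sensitivity as the missing ingredient. I would state this explicitly so the result reads as the sharpness of the $W_t$ dependence in \eqref{eq:readout-certificate}.
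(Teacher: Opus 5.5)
Your witness is the paper's: two scalar values $0$ and $W$, identity readout, uniform reference $q=(1/2,1/2)$, one logit shifted by a small amount, and the value spread sent to infinity after the logit change is fixed. Your direct expression $W\bigl(\tfrac12-\tfrac1{1+e^{\epsilon}}\bigr)$ is exactly the paper's discrepancy, and on its own it finishes the proof.

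The step that fails is the appeal to the equality case of Lemma~\ref{lem:tv}, because the uniform reference is not extremal. Write $r_i=e^{\ell_i-\ell^0_i}$. Equality requires $\sum_i q_i r_i=\sqrt{\min_i r_i\,\max_i r_i}$. Here $r=(e^{\epsilon},1)$, so $\sum_i q_ir_i=(1+e^{\epsilon})/2>e^{\epsilon/2}$, and equality does not hold. In fact $\TV(p,q)=\tfrac12-\tfrac1{1+e^{\epsilon}}=\tfrac12\tanh(\epsilon/2)=\tanh(\epsilon/4)/(1+\tanh^2(\epsilon/4))$, which is strictly less than $\tanh(\epsilon/4)$.

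Your two expressions are therefore not equal, and the threshold $W>\varepsilon_0/\tanh(\epsilon/4)$ does not guarantee a discrepancy above $\varepsilon_0$. For example, with $\epsilon=4$, $\varepsilon_0=1$ and $W=1.4$, the discrepancy is about $0.67$. There are two easy fixes. You can choose $W>2\varepsilon_0/\tanh(\epsilon/2)$. Or you can keep your threshold and use the symmetric pair $\ell^0=(-\epsilon/4,\epsilon/4)$, $\ell=(\epsilon/4,-\epsilon/4)$, which still has $R=\epsilon$ and does attain $\TV(p,q)=\tanh(\epsilon/4)$.

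The paper avoids the issue by working only with the exact expression and letting the value $M\to\infty$. It also adds a remark you omit. With truth $0$, all outputs are nonnegative, so the absolute-loss difference equals the output discrepancy, and the conclusion extends to losses.
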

\begin{proof}
Use two scalar values $0,M$, identity readout, $q=(1/2,1/2)$ and logits changing from $(0,0)$ to $(0,h)$ for any $h>0$. The discrepancy is $M(e^h/(1+e^h)-1/2)$, which is unbounded as $M\to\infty$. One may make $h$ arbitrarily small before choosing $M$. With truth $0$, all outputs are nonnegative, so the absolute-loss difference is the same discrepancy.
\end{proof}

\subsection{Squared-distance softmax}
As a concrete specialization, let $z=(u,w_1,\ldots,w_n)$ concatenate a query and ordered references in $\R^d$, where $d\ge1$ is the vector dimension. Write $z^0=(u^0,w_1^0,\ldots,w_n^0)$ for the retained summary. Every component vector of both summaries has Euclidean norm at most one. For a positive softmax temperature $\vartheta$, set
\begin{equation}\label{eq:distance-softmax}
\begin{gathered}
 \ell_i(z)=-\|u-w_i\|_2^2/\vartheta,\qquad \vartheta>0,\\
 \rho=\|z-z^0\|_2.
\end{gathered}
\end{equation}
Here $\rho$ measures Euclidean drift of the complete concatenated summary. 
\begin{proposition}[Explicit logit-gap bound]\label{prop:gap}
On the product of unit balls,
\begin{equation}\label{eq:logit-gap}
\osc(\ell(z)-\ell(z^0))\le 6\rho/\vartheta.
\end{equation}
Consequently Proposition~\ref{prop:cert} applies with $\tanh(3\rho/(2\vartheta))$. If every query and reference has norm exactly one, the coefficient $6$ can be replaced by $2\sqrt6$.
\end{proposition}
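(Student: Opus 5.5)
The plan is to remove everything that is common to all indices, write each remaining logit change as an exact bilinear expression in the drift, and then apply one Cauchy--Schwarz step to the full drift vector. Using a separate triangle-inequality bound for each term would give $4\sqrt3$ rather than $6$, so the coefficient has to be extracted jointly.

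\emph{Reduction.} Since $\osc$ is invariant under adding a constant vector, $\osc(\ell(z)-\ell(z^0))=\max_{i,j}|f_i-f_j|/\vartheta$, where
\[
f_i=\|u-w_i\|_2^2-\|u^0-w_i^0\|_2^2 .
\]
I will use the midpoints $\bar u=(u+u^0)/2$ and $\bar w_i=(w_i+w_i^0)/2$, and the increments $\delta u=u-u^0$ and $\delta w_i=w_i-w_i^0$. The midpoints lie in the closed unit ball by convexity. The exact identity $\|x\|^2-\|y\|^2=(x-y)\cdot(x+y)$ then gives
\[
f_i=2(\delta u-\delta w_i)\cdot(\bar u-\bar w_i).
\]
This holds with no remainder term, which is why I use midpoints instead of linearizing around $z^0$.

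\emph{General unit-ball case.} For fixed $i\neq j$,
\[
f_i-f_j=2\bigl[\delta u\cdot(\bar w_j-\bar w_i)-\delta w_i\cdot(\bar u-\bar w_i)+\delta w_j\cdot(\bar u-\bar w_j)\bigr].
\]
Cauchy--Schwarz in the concatenated space bounds this by
\[
2\sqrt{\|\delta u\|^2+\|\delta w_i\|^2+\|\delta w_j\|^2}\;\sqrt{S},\qquad S=\|\bar w_j-\bar w_i\|^2+\|\bar u-\bar w_i\|^2+\|\bar u-\bar w_j\|^2 .
\]
The first square root is at most $\rho$. The key step, and the one I expect to be the main obstacle, is bounding $S$ jointly: no more than $9$ is allowed, whereas bounding each pairwise distance by $2$ gives only $12$. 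For three points $a,b,c$ the sum of squared pairwise distances equals
\[
3(\|a\|^2+\|b\|^2+\|c\|^2)-\|a+b+c\|^2\le 9
\]
when all three lie in the unit ball. This yields $|f_i-f_j|\le 6\rho$ and hence \eqref{eq:logit-gap}. The case $n=1$ is trivial, since the oscillation is then zero.

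\emph{Consequence and unit-sphere refinement.} Inserting $R\le6\rho/\vartheta$ into Proposition~\ref{prop:cert}, and using that $\tanh$ is increasing, gives the coefficient $\tanh(3\rho/(2\vartheta))$.

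When every query and reference has norm exactly one, I will instead write $\ell_i=(2u\cdot w_i-\|u\|^2-1)/\vartheta$. Here $\|u\|^2=\|u^0\|^2=1$, so only the term $2u\cdot w_i$ varies with $i$. The same midpoint identity gives
\[
u\cdot w_i-u^0\cdot w_i^0=\delta u\cdot\bar w_i+\bar u\cdot\delta w_i .
\]
Hence the relevant difference is
\[
2\bigl[\delta u\cdot(\bar w_i-\bar w_j)+\bar u\cdot\delta w_i-\bar u\cdot\delta w_j\bigr].
\]
Cauchy--Schwarz now produces the factor
\[
\sqrt{\|\bar w_i-\bar w_j\|^2+2\|\bar u\|^2}\le\sqrt{4+2}=\sqrt6 ,
\]
which gives the coefficient $2\sqrt6$. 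As a sanity check, I will verify that the bound $S\le9$ used in the general case cannot be improved for midpoints, for example by taking three points at $120^\circ$ on the unit circle. This confirms that $6$ is the natural constant that this method delivers.
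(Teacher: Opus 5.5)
Your proof is correct and is essentially the paper's argument. Since $g_{ij}=\vartheta(\ell_i-\ell_j)$ is quadratic, your exact midpoint identity is the paper's integration of $\nabla g_{ij}$ along the chord (the average gradient over a segment equals the gradient at its midpoint), and you bound that gradient by the same three-point identity ($S\le 9$) and by $4+1+1=6$ in the unit-norm case, giving the paper's constants $6$ and $2\sqrt6$ and the passage to $\tanh(3\rho/(2\vartheta))$ via $\tanh(R/4)$.
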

\begin{proof}
For each pair $i,j$, the query norm cancels in $g_{ij}=\vartheta(\ell_i-\ell_j)=\|u-w_j\|^2-\|u-w_i\|^2$. Its gradient in the three relevant vectors has squared norm
\begin{equation}\label{eq:display-27}
 4\bigl(\|w_i-w_j\|^2+\|u-w_i\|^2+\|u-w_j\|^2\bigr)\le36.
\end{equation}
Indeed, the sum of three pairwise squared distances equals three times the sum of squared norms minus the squared norm of their sum, and is at most nine. The product of unit balls is convex, so integrating the gradient on a line segment gives $|g_{ij}(z)-g_{ij}(z^0)|\le6\rho$. Taking the largest pair gives the oscillation bound.

For exactly unit endpoints, $g_{ij}=2u^\top(w_i-w_j)$. Extend this expression to the product of balls along the chord; its squared gradient norm is at most $4(4+1+1)=24$. Its values at the unit endpoints equal the original gap, giving the second bound.
\end{proof}
On the product of unit balls, the metric trigger $\rho\le\eta$ therefore yields
\begin{equation}\label{eq:display-28}
 \|y_t^{\rm ca}-y_t^{\rm fr}\|
 \le L_{H,t}W_t\tanh\!\left(\frac{3\eta}{2\vartheta}\right).
\end{equation}
Bounding the right-hand side uniformly over the allowed inputs gives a hard output tolerance. A known uniform bound $L_{H,t}W_t\le M$ with $M>0$ and $\tanh\xi\le\xi$ for $\xi\ge0$ yields $L=3M/(2\vartheta)$ in \eqref{eq:uniform-sensitivity}, using the full concatenated Euclidean metric on the product of unit balls. Equation~\eqref{eq:outputbridge} applies with this $L$.

If $r=\rho/\sqrt{(n+1)d}$ is a coordinate RMS, the first certificate contains $\tanh(\tfrac32\sqrt{(n+1)d}\,r/\vartheta)$.

\begin{proposition}[Local optimality of the geometric coefficients]\label{prop:geometricsharp}
For $n=2$ and $d\ge2$, the constants $6$ and $2\sqrt6$ in the respective logit-oscillation bounds cannot be replaced by smaller universal constants. Sharpness is local with respect to concatenated Euclidean drift.
\end{proposition}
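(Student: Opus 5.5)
The plan is to exhibit, for each constant, a family of configurations in which the ratio $\osc(\ell(z)-\ell(z^0))\,\vartheta/\rho$ tends to that constant as $\rho\to0$. This is done by a first-order expansion of the single pair gap. For $n=2$ there is one pair, so $\osc(\ell(z)-\ell(z^0))=|g_{12}(z)-g_{12}(z^0)|/\vartheta$, with $g_{12}=\|u-w_2\|^2-\|u-w_1\|^2$ as in the proof of Proposition~\ref{prop:gap}. Since $g_{12}$ is a polynomial, any base point $z^0$ and unit direction $h$ give
\[
g_{12}(z^0+sh)-g_{12}(z^0)=s\,\nabla g_{12}(z^0)\cdot h+O(s^2).
\]
Choosing $h=\nabla g_{12}(z^0)/\|\nabla g_{12}(z^0)\|$ then yields a ratio tending to $\|\nabla g_{12}(z^0)\|$ as $s\to0$, provided the perturbed points remain admissible. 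So the task reduces to making the admissible gradient norm approach $6$, respectively $2\sqrt6$.

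\textbf{Coefficient $6$.} By \eqref{eq:display-27}, $\|\nabla g_{12}\|^2=4S$, where $S$ is the sum of the three pairwise squared distances among $u,w_1,w_2$. We have $S=9$ exactly when these three vectors are unit vectors summing to zero, i.e.\ an equilateral triangle on a unit circle. Such a triangle needs $d\ge2$.

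At that configuration the gradient direction pushes $w_2$ outward, since $w_2\cdot(w_2-u)=3/2>0$, so it would leave the ball. To avoid this, I will scale the triangle by $1-\epsilon$. The base point is then interior to the product of unit balls, so every small perturbation in every direction is admissible. Also $S=9(1-\epsilon)^2$, so the gradient norm is $6(1-\epsilon)$.

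The local ratio therefore tends to $6(1-\epsilon)$ as $\rho\to0$. Since $\epsilon>0$ is arbitrary, no universal constant below $6$ works, even restricted to arbitrarily small drift.

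\textbf{Coefficient $2\sqrt6$.} This is the step needing care: both endpoints must lie exactly on the spheres, so only tangential directions are available, and $\rho$ is a chord rather than an arc length. On unit endpoints $g_{12}=2u^\top(w_1-w_2)$. Its gradient has components $2(w_1-w_2)$, $2u$ and $-2u$ in $u$, $w_1$ and $w_2$. Projecting each component onto the tangent space of its sphere gives the Riemannian gradient.

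I will take $w_2=-w_1$ and $u\perp w_1$, which is possible since $d\ge2$. Then every component is already tangent, with norms $4$, $2$ and $2$. The Riemannian gradient norm is therefore $\sqrt{16+4+4}=2\sqrt6$.

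Next, move along the product-of-spheres curve $\gamma(s)$ whose initial velocity is this unit tangent vector. The gap then changes by $2\sqrt6\,s+O(s^2)$, while the concatenated chord satisfies $\rho=\|\gamma(s)-\gamma(0)\|=s+O(s^2)$. Hence the ratio tends to $2\sqrt6$, which shows the second constant is also locally sharp.

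Throughout, the argument uses only Taylor expansion and the explicit gradients already computed in Proposition~\ref{prop:gap}. This is why the sharpness is local in the concatenated Euclidean drift, as the statement says.
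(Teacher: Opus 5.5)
Your proposal is correct and follows the paper's proof essentially step for step. The triangle of circumradius $1-\epsilon$ perturbed along the full gradient is the paper's interior construction with $r\uparrow1$. Your configuration $w_2=-w_1$, $u\perp w_1$, moved along the Riemannian gradient $(4w_1,2u,-2u)$, is exactly the paper's explicit great-circle curve with angular speeds $4,2,2$, which gives $G(s)=4\sin(6s)$ and $\rho(s)^2=6-2\cos(4s)-4\cos(2s)$. You phrase it through a first-order Taylor expansion rather than closed-form trigonometry, and your remark that the boundary triangle's gradient pushes $w_2$ outward usefully explains why the interior scaling is needed.
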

Appendix~\ref{app:geometricsharp} proves local optimality.

\begin{proposition}[No dimension-free guarantee from complete-coordinate RMS]\label{prop:rms}
Under the unit-ball assumptions, no output-error modulus $\omega(r)$ with $\omega(r)\to0$ as $r\to0$ can hold uniformly over bank size $n$ and dimension $d$ using only complete-coordinate RMS drift, even with current scalar values in $[-1,1]$ and an identity readout.
\end{proposition}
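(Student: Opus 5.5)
Assume, for contradiction, a modulus $\omega$ with $\omega(r)\to0$ such that $\|y^{\rm ca}_t-y^{\rm fr}_t\|\le\omega(r)$ for every $n$, every $d$, all admissible summaries, and all current scalar values in $[-1,1]$. The plan is to fix one two-reference configuration whose output discrepancy is a positive constant, then lower its complete-coordinate RMS by padding with coordinates that never change. Figure~\ref{fig:certificate}(B) suggests the parameters $n=2$, $h=1/2$, $\vartheta=1$, with limiting discrepancy $\tanh(1)$.

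\textbf{Fixed configuration.} Take $n=2$, query $u=0$ in both summaries, and first reference $w_1=0$ in both summaries. For the second reference use $w_2^0=0$ in the retained summary and $w_2=he_1$ in the current one. The retained logits are then $(0,0)$, so the retained weights are $(1/2,1/2)$. The current logits are $(0,-h^2/\vartheta)$. With current values $v_1=1$, $v_2=-1$ and identity readout, the output is $p_1-p_2=\tanh(\Delta/2)$, where $\Delta=\ell_1-\ell_2$. The discrepancy is therefore $\tanh(h^2/(2\vartheta))$. This is a fixed positive constant $c$, independent of $d$, and every vector stays in the unit ball as long as $h\le1$. To reach exactly $\tanh(1)$ instead, one could place references at $\pm he_1$ and move the query; any positive constant suffices, so I would not depend on matching the figure.

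\textbf{Dilution.} The drift is $\rho=h$, and it is confined to the first coordinate of one reference. The summary has $(n+1)d=3d$ coordinates, so $r=h/\sqrt{3d}$. When these vectors are embedded in $\R^d$ and $d$ grows, the extra coordinates are zero in both summaries and do not affect the logits, values, or output. The discrepancy stays equal to $c$ while $r\to0$. Hence $\omega(r)\ge c$ along a sequence with $r\to0$, which contradicts $\omega(r)\to0$. Bank size could be diluted the same way by adding repeated copies of references, but growing $d$ alone already suffices.

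\textbf{Remaining checks.} The main obstacle is pinning down the quantifiers. The modulus must be uniform over both $n$ and $d$, so exhibiting a single sequence with $d\to\infty$ is enough. I also have to check that the scalar values in $[-1,1]$ and the identity readout meet the stated restrictions. Finally, I would note that $W_t=2$ stays bounded, so the failure comes from RMS dilution and not from large values, unlike Proposition~\ref{prop:no}.
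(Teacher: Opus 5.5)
Your proof is correct. At fixed $n=2$, your witness has cached output $0$ and fresh output $\tanh(h^2/(2\vartheta))$ for every $d$, while $r=h/\sqrt{3d}\to0$, and this already rules out a modulus that is uniform jointly in $n$ and $d$.

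The dilution mechanism is the paper's, but the witness is different:
\begin{itemize}
\item The paper takes even $n$ and keeps the references fixed, half at $e_1$ with value $+1$ and half at $-e_1$ with value $-1$. It moves only the query, from $0$ to $he_1$.
\item The cached mixture is again $0$, and the discrepancy is $\tanh(2h/\vartheta)$, linear rather than quadratic in $h$.
\item Because a single query displacement shifts every logit at once, $\rho=h$ does not depend on $n$. So $r=h/\sqrt{(n+1)d}\to0$ along $n\to\infty$ at fixed $d$ as well as along $d\to\infty$ at fixed $n$.
\end{itemize}
The paper's witness therefore gives a slightly stronger conclusion than yours.

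Your parenthetical claim that bank size could be diluted ``the same way'' by repeating references does not hold for your witness.
\begin{itemize}
\item If the copies of $w_2$ also move, then $n=2k$ gives $\rho=h\sqrt{k}$ and $r\to h/\sqrt{2d}$, which does not vanish at fixed $d$.
\item If only one reference moves, only its logit changes. On the unit balls all logits lie in $[-4/\vartheta,0]$, so its weight is $\Theta(1/n)$, and with values in $[-1,1]$ the discrepancy decays like $O(1/n)$.
\end{itemize}
You explicitly rely only on growing $d$, so the argument stands, but that remark should be dropped.
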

\begin{proof}
Let $e_j$ denote the $j$th standard basis vector. For even $n$, put half of the bank summaries at $e_1$ and half at $-e_1$, and give them respective current values $+1$ and $-1$. Change the query from $0$ to $he_1$ for fixed $0<h\le1$. The cached mixture is zero, and
\begin{equation}\label{eq:rms-witness}
\begin{gathered}
 \rho=h,\qquad r=\frac{h}{\sqrt{(n+1)d}},\\
 |y^{\rm fr}-y^{\rm ca}|=\tanh(2h/\vartheta).
\end{gathered}
\end{equation}
The output discrepancy is independent of $n,d$, while $r$ tends to zero when $n$ grows or unused zero coordinates are appended. A uniform vanishing modulus is impossible.
\end{proof}
Averaging over the full concatenated summary causes the dilution in \eqref{eq:rms-witness}. For this fixed-reference construction, the unscaled query displacement $\|u-u^0\|_2=h$ and full drift $\rho=h$ do not shrink under coordinate padding.

\subsection{Computational cost}\label{sec:cost}
In a sequential cost model, let $B$ be current-input and readout work shared by both implementations, $c$ the fixed construction cost, $h$ the per-opportunity controller cost, and $s$ additional state and initialization work. With no overlap between these costs,
\begin{equation}\label{eq:work-cost}
\begin{gathered}
 C_{\rm fr}=B+Tc,\qquad C_{\rm ca}=B+Nc+Th+s,\\
 C_{\rm ca}<C_{\rm fr}\quad\Longleftrightarrow\quad(T-N)c>Th+s.
\end{gathered}
\end{equation}
Shared summary work enters $B$; work done only for reuse enters $h$. Computing $R$ and $W_t$ also contributes to $h$: exact $R$ may require fresh logits, and the value diameter can be expensive. Latency and energy additionally depend on overlap, transfers, batching, and memory access.

\section{Related work}\label{sec:related}
Online function tracking combines update charges with accumulated penalties. Bienkowski and Schmid allow arbitrary integer representatives and prove randomized lower bounds against oblivious adversaries for linear penalties and hard tolerances \cite{bienkowskischmid}. Yi and Zhang study multidimensional hard-tolerance tracking with arbitrary feasible representatives \cite{yizhang}. Chen and Russell extend metric tracking to movement-dependent costs, including a fixed charge for any nonzero move \cite{chenrussell}. Their pay-if-move guarantee is logarithmic in the largest tolerance-ball cardinality, where $B_\eta(x)=\{z:d(x,z)\le\eta\}$. Table~\ref{tab:tracking-models} compares the action sets and lower-bound constructions.

\begin{table*}[t]
\caption{Tracking actions, initialization, and lower-bound constructions.}
\label{tab:tracking-models}
\centering\footnotesize
\begin{tabular}{>{\raggedright\arraybackslash}p{.14\textwidth}>{\raggedright\arraybackslash}p{.25\textwidth}>{\raggedright\arraybackslash}p{.25\textwidth}>{\raggedright\arraybackslash}p{.26\textwidth}}
\toprule
Axis & Bienkowski--Schmid \cite{bienkowskischmid} & Chen--Russell \cite{chenrussell} & Current resets \\
\midrule
Online / offline actions & Arbitrary integer representatives for both & Arbitrary feasible metric representatives for both & Retain one state or reset to the current observation for both \\[3pt]
Initialization & Initial $f(0)$ is free; Lemma~6 pays one update to select its hidden center & Fixed initialization terms are separate from the asymptotic comparison & Mandatory initial refresh is charged; the penalized witness uses at most two refreshes \\[3pt]
Horizon / precision & Single-block phase lengths are fixed in advance; integer grid & General metric; Appendix~A uses integer candidate intervals & Hard bound: fixed five-point metric. Penalized: known horizon and a growing, finer grid \\[3pt]
Earlier witness & Lemma~6 may select the final singleton before it is observed & Appendix~A retains a final feasible-intersection point & A common prefix makes the eventual center a legal reset \\[3pt]
Scaling / resets & Scaling gives fixed charge, bounded variation and a fixed block horizon; a shorter candidate interval controls prefix cost in our adaptation & Candidate presentation and repeated endpoints adapt the candidate-intersection mechanism & Theorem~\ref{thm:softimpossible} proves the prefixed transfer with offline cost $<7\lambda/2$ and randomized online cost $\ge b\lambda/16$ in expectation \\[3pt]
Quantitative conclusion & Published linear-penalty ratio $\Omega(\log C/\log\log C)$ as integer-grid charge $C$ grows & Randomized $O(\log b_\eta)$ for fixed nonzero-move charge, $b_\eta=\max_x|B_\eta(x)|$ & Theorem~\ref{thm:hardimpossible}: expected $(T+1)/2$ versus exactly two. On its fixed metric, $b_\eta\le5$ \\
\bottomrule
\end{tabular}
\end{table*}

A candidate prefix adapts Chen and Russell's Appendix~A argument to current resets. With radius $2$, first expose $(1,2,0,3)$. Let $P$ be the retained point at its end and choose $c\in\{1,2\}$ with $\Pr(P=c)\le1/2$. Append $(-1,3,-1,3,\ldots)$ if $c=1$, or $(4,0,4,0,\ldots)$ if $c=2$. Offline can retain $1$ initially or refresh at the prefix occurrence of $2$, using at most two refreshes. If $P\ne c$, a suffix endpoint is retained by the second suffix observation; every later opposite endpoint then forces a refresh. Theorem~\ref{thm:hardimpossible} gives a shorter prefix, the stated horizon constant, exactly two offline refreshes, and a strict radius margin. On its fixed five-point metric, $b_\eta\le5$ makes the arbitrary-representative guarantee constant up to fixed initialization terms, while the current-observation ratio grows with $T$.

For the linear-penalty objective, a fixed charge is a normalization. Write $f_t$ for observations and $g_t$ for retained representatives. Starting from unit discrepancy penalty and update charge $C$, scale coordinates and cost by $s=\lambda/C$:
\begin{equation}\label{eq:display-29}
\begin{aligned}
 s\Bigl(\sum_t|f_t-g_t|+CN\Bigr)
 &=\sum_t|sf_t-sg_t|+\lambda N,\\
 V_{\rm scaled}&=(\lambda/C)V_{\rm original}.
\end{aligned}
\end{equation}
Bienkowski and Schmid's Theorem~6 uses a nested-interval block whose phase lengths depend only on $C$ and the phase index. Its observations are nondecreasing left endpoints in an interval of width below $C$. Adjoining an initial $0$ and applying \eqref{eq:display-29} gives charge $\lambda$, variation $V_{\rm scaled}<\lambda$, and a horizon fixed before execution.

Bienkowski and Schmid's Lemma~6 allows offline to select the final singleton before its first observation. To make this comparator current-reset legal, Theorem~\ref{thm:softimpossible} uses $M$ candidates at charge $M^2$ and exposes them in a common prefix. The prefix distortion is $O(M^2)$, so scaling keeps the total offline cost $O(\lambda)$. The nested phases preserve the randomized lower bound and predetermined lengths.

Send-on-delta reporting relates change thresholds to signal variation and reporting rate \cite{miskowicz}. Diaz-Cacho and colleagues adapt thresholds from observed increments to target a mean transmission rate \cite{adaptivesod}. Event-triggered learning \cite{learning} and lazy aggregated gradients \cite{lag} analyze costly updates within learning and optimization dynamics.

Under Poisson updates and a linear age penalty, Ling and Mi obtain an optimal cache refresh interval proportional to the square root of the refresh charge \cite{lingmi}.

Variation-budget stochastic optimization restricts changes in objective functions and compares decisions made before noisy feedback with a dynamic oracle \cite{bgz}. Smoothed online convex optimization observes the current convex cost before choosing an arbitrary real location and charges movement by distance \cite{soco}. The dynamic-regret model with indicator switching costs charges a fixed amount for changing actions in a bounded convex set. Decisions precede observation of the loss, and the comparator is not charged the same switching penalty \cite{indicator}.

TCP acknowledgment trades service charges against waiting costs \cite{dooly}; recent work provides competitive and expert-regret guarantees \cite{ack}. Waiting costs grow until service, whereas retained-state discrepancy can increase or decrease.

The output certificate combines the sharp Hilbert-distance comparison \cite{reeb,cohen} with the diameter of the current values.

\section{Conclusion}
Current-observation resets make missed states irrecoverable until they reappear. Under a hard discrepancy radius, a fixed five-point metric forces horizon-linear expected refresh counts against an offline optimum of two. For the penalized objective, a candidate prefix makes the Bienkowski--Schmid offline witness legal under current resets. The resulting noncompetitiveness holds for randomized policies on bounded-variation paths, with increasingly fine grids. On a fixed finite metric with positive minimum separation, penalized tracking instead admits a competitive policy. Cumulative variation controls the absolute cost of the anytime and clipped policies.

Sensitivity assumptions convert state drift to same-input output upper bounds. For squared-distance softmax, the resulting geometric coefficients are locally optimal, while the RMS example shows how unused coordinates can conceal a fixed output error.

\paragraph{Code and data}
Implementations, exact checks, figure scripts, and synthetic benchmark records are available at \url{https://github.com/usamihiroyasu/missed-states}.

\appendices
\section{Local optimality of the softmax coefficients}\label{app:geometricsharp}
\begin{proof}[Proof of Proposition~\ref{prop:geometricsharp}]
Write $a=w_1,b=w_2$ and
\begin{equation}\label{eq:display-30}
 G(u,a,b)=\|u-b\|^2-\|u-a\|^2,
 \qquad \ell_1-\ell_2=G/\vartheta.
\end{equation}
For two categories, logit-change oscillation is $|\Delta G|/\vartheta$. In the unit-ball case use an interior radius $0<r<1$ and set
\begin{equation}\label{eq:display-31}
\begin{gathered}
 u_r=r(1,0),\quad a_r=r(-1/2,\sqrt3/2),\\
 b_r=r(-1/2,-\sqrt3/2).
\end{gathered}
\end{equation}
These are interior points and $G=0$. Its full gradient is
\begin{equation}\label{eq:display-32}
\begin{gathered}
 \nabla G=(2(a-b),\,2(u-a),\,2(b-u)),\\
 \|\nabla G(u_r,a_r,b_r)\|=6r.
\end{gathered}
\end{equation}
A sufficiently small perturbation in the unit gradient direction remains feasible and makes $|\Delta G|/\rho$ tend to $6r$. Let $r\uparrow1$.

For exact unit vectors take a real parameter $s$ near zero and set
\begin{equation}\label{eq:display-33}
\begin{gathered}
 u_s=\cos(4s)e_2+\sin(4s)e_1,\\
 a_s=\cos(2s)e_1+\sin(2s)e_2,\qquad b_s=-a_s.
\end{gathered}
\end{equation}
Write $G(s)=G(u_s,a_s,b_s)$ and measure $\rho(s)$ from $(u_0,a_0,b_0)$. All vectors remain unit length and
\begin{equation}\label{eq:display-34}
\begin{aligned}
 G(s)&=4\sin(6s),\\
 \rho(s)^2&=6-2\cos(4s)-4\cos(2s).
\end{aligned}
\end{equation}
Thus $|\Delta G|/\rho\to2\sqrt6$. The two universal upper bounds are therefore optimal. With fixed scalar values $+1,-1$, the mixture equals $\tanh(G/(2\vartheta))$; its derivative at $G=0$ also attains the corresponding composed certificate's first-order coefficient.
\end{proof}

\balance

\begin{IEEEbiography}[{\includegraphics[width=1in,height=1.25in,keepaspectratio,trim=325bp 94bp 33bp 40bp,clip]{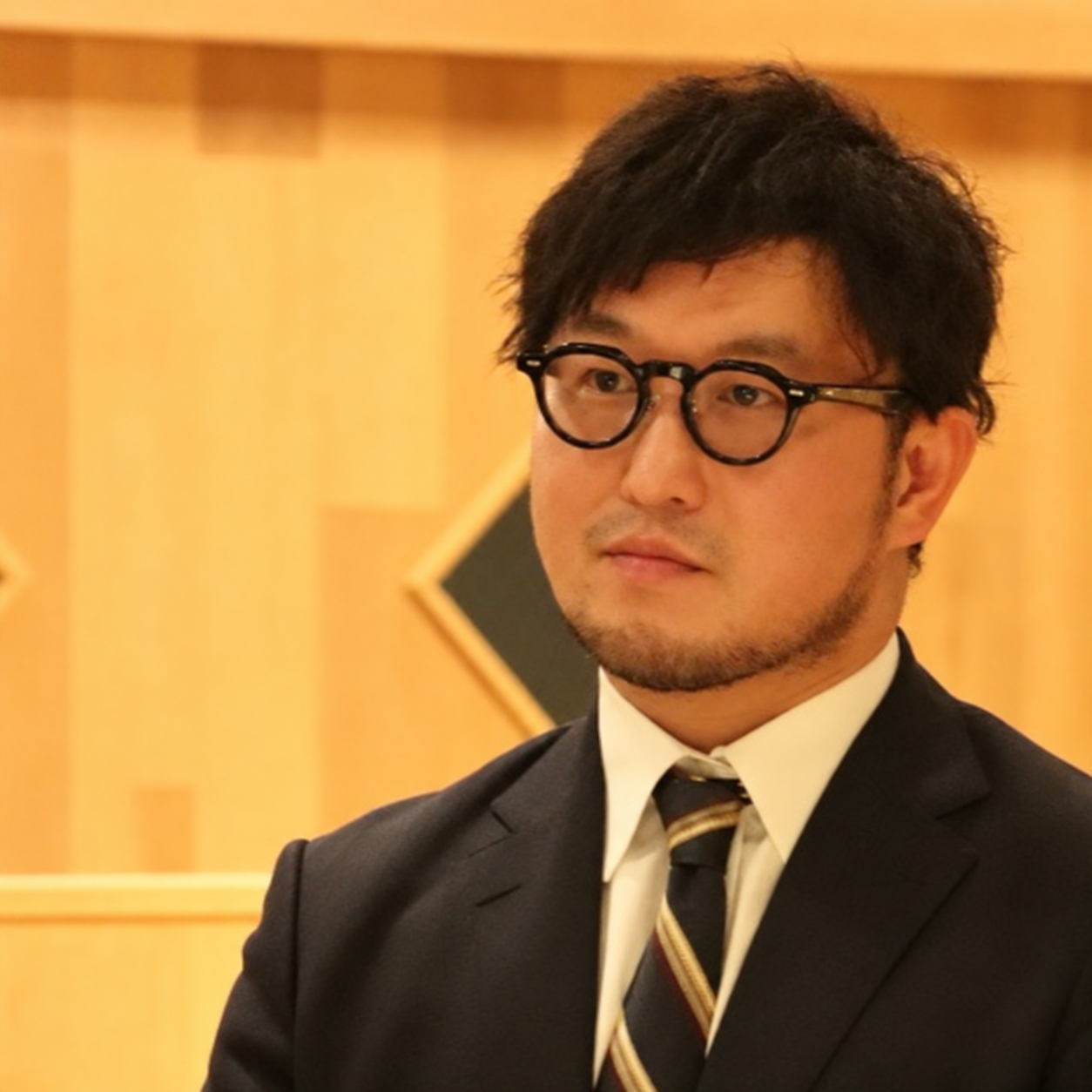}}]{Hiroyasu Usami}
received the Ph.D. degree in engineering from Chubu University in 2019. He is a Senior Assistant Professor at Chubu University and a Researcher at Aichi Medical University. His research focuses on computer vision, machine learning, and medical image analysis. He serves on the Editorial Board of \emph{Scientific Reports} and the editorial committee of the \emph{IPSJ Journal/Journal of Information Processing}.
\EOD
\end{IEEEbiography}

\begin{thebibliography}{99}
\bibitem{bienkowskischmid} M. Bienkowski and S. Schmid, ``Online function tracking with generalized penalties,'' in \emph{Proc. SWAT}, LNCS, vol. 6139, 2010, pp. 359--370. doi: \href{https://doi.org/10.1007/978-3-642-13731-0_34}{\nolinkurl{10.1007/978-3-642-13731-0_34}}.
\bibitem{chenrussell} S. Chen and A. Russell, ``Online metric tracking and smoothing,'' \emph{Algorithmica}, vol. 68, no. 1, pp. 133--151, 2014. doi: \href{https://doi.org/10.1007/s00453-012-9669-8}{\nolinkurl{10.1007/s00453-012-9669-8}}.
\bibitem{yizhang} K. Yi and Q. Zhang, ``Multi-dimensional online tracking,'' \emph{ACM Transactions on Algorithms}, vol. 8, no. 2, pp. 12:1--12:16, 2012. Preliminary version in SODA, 2009. Author manuscript: \url{https://cse.hkust.edu.hk/~yike/onlinetrack-j.pdf}.
\bibitem{reeb} D. Reeb, M. J. Kastoryano, and M. M. Wolf, ``Hilbert's projective metric in quantum information theory,'' \emph{Journal of Mathematical Physics}, vol. 52, Art. no. 082201, 2011. \href{https://arxiv.org/abs/1102.5170}{arXiv:1102.5170}.
\bibitem{cohen} S. N. Cohen and E. Fausti, ``Hyperbolic contractivity and the Hilbert metric on probability measures,'' \href{https://arxiv.org/abs/2309.02413v2}{arXiv:2309.02413v2}, 2024.
\bibitem{miskowicz} M. Mi\'skowicz, ``Send-on-delta concept: an event-based data reporting strategy,'' \emph{Sensors}, vol. 6, pp. 49--63, 2006. doi: \href{https://doi.org/10.3390/s6010049}{\nolinkurl{10.3390/s6010049}}.
\bibitem{adaptivesod} M. Diaz-Cacho, E. Delgado, A. Barreiro, and P. Falc\'on, ``Basic send-on-delta sampling for signal tracking-error reduction,'' \emph{Sensors}, vol. 17, no. 2, Art. no. 312, 2017. doi: \href{https://doi.org/10.3390/s17020312}{\nolinkurl{10.3390/s17020312}}.
\bibitem{learning} F. Solowjow and S. Trimpe, ``Event-triggered learning,'' \emph{Automatica}, vol. 117, Art. no. 109009, 2020. doi: \href{https://doi.org/10.1016/j.automatica.2020.109009}{\nolinkurl{10.1016/j.automatica.2020.109009}}.
\bibitem{lag} T. Chen, G. B. Giannakis, T. Sun, and W. Yin, ``LAG: Lazily aggregated gradient for communication-efficient distributed learning,'' in \emph{Proc. NeurIPS}, 2018. \href{https://arxiv.org/abs/1805.09965}{arXiv:1805.09965}.
\bibitem{lingmi} Y. Ling and J. Mi, ``An optimal trade-off between content freshness and refresh cost,'' \emph{Journal of Applied Probability}, vol. 41, no. 3, pp. 721--734, 2004. doi: \href{https://doi.org/10.1239/jap/1091543421}{\nolinkurl{10.1239/jap/1091543421}}.
\bibitem{bgz} O. Besbes, Y. Gur, and A. Zeevi, ``Non-stationary stochastic optimization,'' \emph{Operations Research}, vol. 63, no. 5, pp. 1227--1244, 2015. \href{https://arxiv.org/abs/1307.5449}{arXiv:1307.5449}.
\bibitem{soco} N. Bansal, A. Gupta, R. Krishnaswamy, K. Pruhs, K. Schewior, and C. Stein, ``A 2-competitive algorithm for online convex optimization with switching costs,'' in \emph{Proc. APPROX/RANDOM}, LIPIcs, vol. 40, 2015, pp. 96--109. doi: \href{https://doi.org/10.4230/LIPIcs.APPROX-RANDOM.2015.96}{\nolinkurl{10.4230/LIPIcs.APPROX-RANDOM.2015.96}}.
\bibitem{indicator} N. Mhaisen and G. Iosifidis, ``Dynamic regret in online convex optimization with indicator switching costs,'' \href{https://arxiv.org/abs/2609.30556}{arXiv:2609.30556v1}, 2026. Preprint.
\bibitem{dooly} D. R. Dooly, S. A. Goldman, and S. D. Scott, ``On-line analysis of the TCP acknowledgment delay problem,'' \emph{Journal of the ACM}, vol. 48, no. 2, pp. 243--273, 2001. doi: \href{https://doi.org/10.1145/375827.375843}{\nolinkurl{10.1145/375827.375843}}.
\bibitem{ack} T. Harada, Y. Kawase, and H. Sumita, ``Simultaneous competitive and regret guarantees for TCP acknowledgment,'' \href{https://arxiv.org/abs/2610.13571}{arXiv:2610.13571v1}, 2026. Preprint.
\end{thebibliography}
\end{document}